\newenvironment{proof}[1][Proof]{\textbf{#1.} }{\ \rule{0.5em}{0.5em}}
\newtheorem{thm}{Theorem}[section]
\newtheorem{cor}[thm]{Corollary}
\newtheorem{lem}[thm]{Lemma}
\newtheorem{prop}[thm]{Proposition}
\newtheorem{rem}{Remark}[section]}
\newcommand{\be}{\begin{equation}}
\newcommand{\ee}{\end{equation}}
\newcommand{\bq}{\begin{eqnarray}}
\newcommand{\eq}{\end{eqnarray}}
\newcommand{\bs}{\bigskip}
\newcommand{\nind}{\noindent}
\newcommand{\nn}{\nonumber}
    \definecolor{Red}{rgb}{1.00, 0.00, 0.00}
    \definecolor{DRed}{rgb}{0.0, 0.5, 0.00}
    \definecolor{Blue}{rgb}{0.00, 0.00, 1.00}
        \definecolor{Green}{rgb}{0.00, .50, 0.00}
    \definecolor{PaleGrey}{rgb}{.6, .6, .6}
    \definecolor{Purple}{rgb}{.5, 0.00, 1.0}
\begin{document}

\title{The small-maturity smile for exponential L\'{e}vy models}
\author{Jos\'e E. Figueroa-L\'{o}pez\thanks{Department of Statistics, Purdue University, W. Lafayette, IN, USA ({\tt figueroa@purdue.edu}), work partially supported by the NSF grant \# DMS 0906919.}
\and Martin Forde\thanks{Department of Mathematics, KingÕs College London, Strand, London WC2R 2LS ({\tt Martin.Forde@kcl.ac.uk}).}}


\date{}

\maketitle

\begin{abstract}
We derive a small-time expansion for out-of-the-money call options under an exponential L\'{e}vy model, using the small-time expansion for the distribution function given in Figueroa-L\'{o}pez\&Houdr\'{e}\cite{FLH09}, combined with a change of num\'{e}raire via the Esscher transform. In particular, we find that the effect of a non-zero volatility $\sigma$ of the Gaussian component of the driving L\'{e}vy process is to increase the call price by $\frac{1}{2}\sigma^2 t^2 e^{k}\nu(k)(1+o(1))$ as $t \to 0$, where $\nu$ is the L\'evy density.  Using the small-time expansion for call options, we then derive a small-time expansion for the implied volatility $\hat{\sigma}_{t}^{2}(k)$ {at log-moneyness $k$}, which sharpens the first order estimate $\hat{\sigma}_{t}^{2}(k)\sim  \frac{\frac{1}{2}k^2}{t\log (1/t)}$ given in \cite{Tnkv10}.  Our numerical results show that the second order approximation can significantly outperform the first order approximation.
Our results are also extended to a class of time-changed L\'evy models.
We also consider a small-time, small log-moneyness regime for the CGMY model, and apply this approach to the small-time pricing of at-the-money call options; we show that for {$Y\in(1,2)$}, $\lim_{t\to{}0}t^{-1/Y}\mathbb{E}(S_t-S_0)_{+}=S_{0}\mathbb{E}^{*}(Z_{+})$ and the corresponding at-the-money implied volatility $\hat{\sigma}_t(0)$ satisfies $\lim_{t \to 0}\hat{\sigma}_t(0)/t^{1/Y-1/2}=\sqrt{2\pi}\,\mathbb{E}^{*}(Z_{+})$, where $Z$ is a symmetric {$Y$-stable} random variable under $\mathbb{P}^*$ and {$Y$ is the usual parameter for the CGMY model appearing in the L\'evy density $\nu(x)=C x^{-1-Y}e^{-M x}{\bf 1}_{\{x>0\}}+C |x|^{-1-Y}e^{-G|x|}{\bf 1}_{\{x<0\}}$ of the process}.
\end{abstract}

\section{Introduction}
L\'evy processes have played an important role in the development of financial models which can accurately approximate the so-called stylized features of historical asset prices and option prices.  In the ``statistical world", financial asset prices exhibit distributions with heavy tails and high kurtosis as well as other dynamical features such as volatility clustering and leverage. In the ``risk-neutral world", market prices of vanilla options exhibit ``skewed" implied volatilities (relative to changes in the strike),  contradicting the classical Black-Scholes model which predicts a flat implied volatility smile.  The smile phenomenon has been more pronounced after the 1987 market crash.  Concretely, out-of-the-money equity put options typically bear a higher risk-premium (larger implied volatilities) than in-the-money puts.  This effect is more dramatic as the time-to-maturity decreases. As explained in \cite{CT04} (see Section 1.2.2), the latter empirical fact is viewed by many as a clear indication that a jump risk is recognized by the participants in the option market, and stochastic volatility models are, in general, not able to reproduce the pronounced implied volatility skew of short-term option prices unless the ``volatility of volatility" is forced to take high values.

The literature on small-time asymptotics for option prices and implied volatilities has grown significantly during the last decade. For recent accounts of the subject in the case of stochastic volatility models, we refer the reader to \cite{GHLOW09} for local volatility models, \cite{FJL10} for the Heston model, \cite{Forde09} for a general uncorrelated local-stochastic volatility model and \cite{Forde10} for SABR type models.  We concentrate here on asset price models with jumps. For an It\^o semimartingale model for the underlying price process $(S_{t})$, Carr\&Wu\cite{CW03} argued,  {by partially heuristic arguments, that the price of an out-of-the-money call option converges to zero at sharply different speeds depending on whether the underlying asset price process is purely continuous, purely discontinuous, or a combination of both. For instance, in the presence of jumps, they argue that the
\begin{equation}\label{CrrWuClaimAymp}
	\mathbb{E}(S_t-K)^{+} - (S_{0}-K)_{+}\sim c\, t,
\end{equation}
for some constant $c\neq{}0$, as the time-to-maturity $t$ tends to $0$ \footnote{Actually, \cite{CW03} wrote $\Pi_{t}(K) - (S_{0}-K)_{+} =O(t)$, even though in their empirical analysis they are assuming a stronger statement such as (\ref{CrrWuClaimAymp}).}, while the call price converges at the rate $O(e^{-c/t})$ for a purely-continuous model}.  These {statements} were subsequently exploited in \cite{CW03} to investigate which kind of model is more adequate to describe the observed market option prices near to expiration. They concluded the necessity of both a continuous and a jump component to describe the implied volatility of S\&P 500 index options and argued, {based on simulation experiments}, that the theoretical asymptotic behavior is usually manifested by options maturing within 20 days. {We also refer the reader to \cite{AS02} for further empirical evidence on the presence of both a continuous and jump component.}

Using the closed-form expressions for call option prices,  Boyarchenko\&Levendorksii\cite{BL02} (see also \cite{Lev04a},  \cite{Lev04b},  \cite{Lev04c}) establish the following small-time asymptotic behaviour
\begin{equation}\label{ARFO}
	\frac{1}{S_0}\mathbb{E} (S_{t}-K)_{+} \,\sim \, t {\int (e^{x}-e^{k})_{+} \nu(dx)},  \quad \quad \quad \quad ({k>0 \quad \&\quad t \to 0})\,,	
\end{equation}
for several popular exponential L\'evy models {$S_{t}=S_{0}e^{X_{t}}$}, where {$k$ is the log-moneyness $k:=\log (K/S_{0})$} and $\nu$ is the L\'evy measure of the underlying L\'evy process {$(X_{t})$}.  {Subsequently, Levendorskii \cite{Lev08} obtained (\ref{ARFO}) under certain technical conditions (see Theorem 2.1 therein), namely that $\int (|x|^{2}\wedge 1)e^{x+\varepsilon|x|} \nu(dx)<\infty$ for some $\varepsilon>0$, and $\lim_{t\to{}0} \mathbb{E} (S_{t}-K)_{+}/t$ exists in the ``out-of-the-money region".}
{More recently}, Roper\cite{Rop10} and Tankov \cite{Tnkv10} prove that (\ref{ARFO}) holds for a general L\'{e}vy {process $(X_{t})$ {under mild conditions}, using the first-order small-time moment asymptotic result
\begin{equation}
	\lim_{t\to{}0}\frac{1}{t}\mathbb{E}\left\{\varphi(X_{t})\right\}=\int \varphi(x)\nu(dx),
\end{equation}
valid  for functions $\varphi$ that converges to $0$ as $x\to{}0$ at an appropriate rate (see, e.g. Figueroa-L\'opez\cite{FL08} for details).  {In particular, it suffices that $\int_{|x|\geq 1} e^{x}\nu(dx)<\infty$.}  {\cite{Lev08} also provides a natural generalization of \eqref{ARFO} for a wide class of multi-factor L\'{e}vy and Markov models.}

{As a corollary of (\ref{ARFO})}, \cite{Rop10} {and  \cite{Tnkv10} prove independently} that the implied volatility $\hat\sigma_{t}(k)$ for exponential L\'evy models explodes near expiration for out-of-the-money vanilla options. {This is a very peculiar feature of financial models with jumps (see Remark \ref{RmrkBlowing} below for a brief discussion about its meaning).} \cite{Tnkv10} goes one step further and shows that
\begin{equation}\label{ImplVolAsympBeh1}
	\hat{\sigma}_{t}^{2}(k)\sim  \frac{\frac{1}{2}k^2}{t\log (1/t)}
\end{equation}
as $t\to{}0$.  For at-the-money call option prices, \cite{Rop10} also shows that the leading order term is $O(\sqrt{t})$ and does not depend on the jump component of the model.  Moreover, the at-the-money implied volatility converges to the volatility of the Gaussian component of the driving L\'evy process, and the limit is zero if the L\'evy process has no Gaussian part. For bounded variation L\'evy processes and for certain tempered-stable like L\'evy processes, \cite{Tnkv10} gives also the first-order asymptotic behavior of {at-the-money} implied volatilities. { The asymptotic behavior (\ref{ImplVolAsympBeh1}) is in sharp contrast with a pure-continuous stochastic volatility model, where the implied volatility converges to a non-negative constant which depends on the shortest distance {from zero to the vertical line with $x$-coordinate equal to the log-moneyness of the call option}, under the Riemmanian metric induced by the diffusion coefficient for the model (see, e.g. \cite{GHLOW09}, \cite{FJL10}, \cite{Forde09}, \cite{Forde10}).}

In this article, we extend previous results by computing the second order correction term {$a_{1}(k)$ in the call option price approximation:
\begin{equation}\label{PureLevyApprx}
	\frac{1}{t}\,\frac{1}{S_0}\mathbb{E}(S_t-K)_{+}\,=\,a_{0}(k)\,+\,
  a_{1}(k)\,t+o(t) \quad \quad \quad \quad (t \to 0)\,.
\end{equation}
}An important component in our proofs is played by the recent higher order small-time expansions for the distribution function of a L\'evy process obtained in Figueroa-L\'{o}pez\&Houdr\'{e}\cite{FLH09}.
 In the spirit of the Black-Scholes formula and the classical change of num\'{e}raire, our approach exploits an appealing representation of the prices of out-of-the money options in terms of the tail distribution functions of the underlying L\'evy process under both the original risk-neutral probability measure $\mathbb{P}$ and under the {martingale probability measure} $\mathbb{P}^{*}$ obtained when we take the stock as the num\'{e}raire; i.e. $\mathbb{P}^{*}(A):=\mathbb{E}\left( S_{t} 1_{A}\right)$ ({see e.g. Chapter 26 in \cite{Bjr09} and references therein}). The latter measure $\mathbb{P}^{*}$ is sometimes called Share measure ({see e.g} Carr\&Madan\cite{CM09}). {Our results allow us to quantify precisely the effects of a non-zero Gaussian-component in the call option prices near expiration. We find that a continuous-component volatility {of} $\sigma$ will result in an call price increase of $\frac{1}{2}\sigma^2 t^2 e^{k}\nu(k)$ {(per each dollar of the underlying spot price)}, where $\nu$ is the L\'{e}vy density and $k$ is the log-moneyness}.

We also derive the corresponding small-time asymptotic behaviour for implied volatility, {showing precisely how the implied volatility diverges to $\infty$ (see Section \ref{section:SmallTimeLevy})}. We find that the dimensionless implied variance does tend to zero as we would expect, but \textit{very} slowly; in fact slower than $t^p$ for any $p>0$, and consequently the implied volatility explodes in the small-time limit.  
Furthermore, we characterize the asymptotic behavior of the relative error of the first order approximation,  which is then used to obtain a second-order approximation for the implied volatility of out-of-the money call options. According to our numerical results (see Section \ref{Sect:NmrExmpl} for the details), the second order approximation significantly reduces the error compared to that of the first order approximation, achieving up to a two-fold relative error reduction in some cases.

 We later extend our analysis to the case of a time-changed exponential L\'evy model {$Z_{t}=X_{T_{t}}$} with an independent absolutely-continuous time change {$T_{t}=\int_{0}^{t}Y_{s}ds$} satisfying some mild moment conditions (see Section \ref{section:SmallTimeTimeChange}). The time-changed L\'evy model was proposed in \cite{CGMY03} to incorporate the volatility clustering and leverage effects {commonly} exhibited by financial price processes.  We show that  the small-time behavior of call option prices depends not only on {the triplet of the underlying L\'evy process $X$ but also} on the time-zero first and second moments of the speed process {$(Y_{t})$} and {the quantity
 \[
 	\gamma:=\lim_{t\searrow{}0}\frac{1}{t}\left[\mathbb{E} Y_{t} -\mathbb{E} Y_{0}\right],
\]
which is assumed to exist. In some sense, $\gamma$ measures the {current} average acceleration of the random clock. Under mild conditions, we show that
\[
\frac{1}{t}\,\frac{1}{S_0}\mathbb{E}(S_t-K)_{+}\,=\, \mathbb{E} Y_{0} a_{0}(k)\,+\,
 \big[ \mathbb{E}Y_{0}^{2} a_{1}(k)+\gamma a_{0}(k)\big]\,t+o(t), \quad \quad \quad \quad (t \to 0)\,,
\]
where $a_{0},a_{1}$ are the first and second order terms {appearing in the pure-L\'evy option price approximation (\ref{PureLevyApprx}). For a Cox-Ingersoll-Ross (CIR) speed process
\begin{equation*}
dY_t=\kappa(\theta-Y_t)dt+\sigma \sqrt{Y_t} dW_t,\quad Y_{0}=y_{0}
\end{equation*}
the current acceleration of the process is $\gamma=\kappa(\theta-y_0)$ and, hence, call option prices will exhibit the following small-maturity asymptotic behavior:
\[
\frac{1}{t}\,\frac{1}{S_0}\mathbb{E}(S_t-K)_{+}\,=\, y_{0} a_{0}(k)\,+\,
\big[ y_{0}^{2} a_{1}(k)+\kappa(\theta-y_0) a_{0}(k)\big]\,t+o(t), \quad \quad \quad \quad (t \to 0)\,.
\]
As seen from this expression, a mean reversion speed of $\kappa$ will increase (resp.,  decrease) the call option price when the current volatility $y_{0}$ is above (resp. below) the long-run mean volatility value $\theta$.}}

 In Section \ref{section:SmallTimeSmallMoneyness}, we also consider a small-time, small log-moneyness regime for the CGMY model of \cite{CGMY02}. The CGMY model is a particular case of the more general KoBoL class {of models}, named after the authors \cite{Kop95} (who first introduced the symmetric version of the model under the name of ``truncated L\'evy flights") and \cite{BL02}. 
 Using the fact that $\left(X_t/t^{1/Y}\right)_{t}$ converges weakly to a symmetric alpha-stable distribution with $\alpha=Y$ as $t \to 0$,
 we show that
\[
	\lim_{t\to{}0}t^{-1/Y}\frac{1}{S_0}\mathbb{E}(S_t-S_0)_{+}=\mathbb{E}^{*}(Z_{+}),
\]
for $Y\in(1,2)$,
where $Z$ is a symmetric $Y$-stable random variable under $\mathbb{P}^*$.
We then apply {this result} to small-time pricing of at-the-money call options for the {CGMY model}. Our method of proof is new and based on the following representation by Carr\&Madan\cite{CM09}
 \be\label{CMR0}
\frac{1}{S_0}\mathbb{E}(S_t-K)_{+}=\mathbb{P}^*(X_t-E> \log \frac{K}{S_0}) \,,\ee
\nind where $E$ is an independent exponential random variable under $\mathbb{P}^*$ with parameter 1.
As a corollary, we conclude that the corresponding at-the-money implied volatility $\hat{\sigma}_t(0)$ satisfies $\lim_{t \to 0}\hat{\sigma}_t(0)/t^{1/Y-\frac{1}{2}}=\sqrt{2\pi}\,\mathbb{E}^{*}(Z_{+})$.  \cite{Tnkv10} obtains a similar result in a more general model using a different approach based on a Fourier-type representation for call option prices. {Let us also remark that the method of proof introduced here can be applied to a large class of L\'evy processes whose L\'evy densities are symmetric and dominated by stable L\'evy densities, and behave like a symmetric $Y$-stable process in the small-time limit (see Remark \ref{GnrClsATM} for the details).}

 In section \ref{section:VarianceCallOptions}, we derive a similar small-time estimate for variance call options using the well known fact that the quadratic variation $[X]_t$ of a L\'{e}vy process is itself a L\'{e}vy process.   Using the main result in \cite{FL08}, we find that an out-of-the-money {variance} call option which pays $([X]_t-K)_{+}$ at time $t$ is worth the same as a European-style contract paying $(\ln\frac{S_t}{S_0})^2-K)_{+}$ at time $t$ as $t \to 0$, irrespective of the L\'{e}vy measure {$\nu(\cdot)$}.  The diffusion component of $(X_t)$ does not show up at leading order for small $t$.  See also \cite{KRMK11} for a related discussion on the difference between the small-time behaviour of variance call options on the exact quadratic variation and its discretely sampled approximation for L\'{e}vy driven models.
 
\section{Small-time asymptotics for exponential L\'{e}vy models}
\label{section:SmallTimeLevy}

Consider an exponential L\'{e}vy model for a stock price process \be
\label{eq:FVLevyModel} S_t=S_0 e^{X_t} \ee \nind where $(X_t)$ is a
L\'{e}vy process defined on a complete probability space $(\Omega, \mathbb{P},\mathcal{F})$ with generating triplet
$(\sigma^2,b,\nu)$.  {We are assuming zero interest rate and dividend yield for simplicity \footnote{{For a non-zero constant interest rate $r$ and dividend rate $q$, the results in this paper will not be qualitatively any different, because we can just replace the stock price process $(S_{t})$ with the forward price process $(e^{-(r-q)t}S_{t})_{t}$, which is a martingale (see, e.g. {Chapter 11 in \cite{CT04}}).}} and that $\mathbb{P}$ represents a risk-neutral pricing measure}.  We assume that $\int_{|x|>1}e^{x}\nu(dx)<\infty$ and that the following condition is satisfied
\be \label{eq:Martingale}
b+\frac{1}{2}\sigma^2+ \int_{-\infty}^{\infty} (e^x-1- x1_{|x|\le
1})\nu(dx)=0, \ee \nind so that $S_t=S_0 e^{X_t}$ is {indeed a $\mathbb{P}$}-martingale relative to its own filtration.

Throughout the paper, we also assume that the L\'evy measure $\nu(dx)$ admits a \emph{positive} density, denoted $\nu(x)$, and that this density is $C^{1}$ in $\mathbb{R}\backslash\{0\}$ satisfying  $\sup_{|x|>\varepsilon}\nu(x)<\infty$ for every $\varepsilon>0$. The choice between the L\'evy measure $\nu(dx)$ and density $\nu(x)$ should be clear from the context. Under the previous standing condition, Figueroa-L\'{o}pez\&Houdr\'{e} \cite{FLH09} show the following result (see Remark 3.3 and Proposition 3.4 therein):
\begin{thm}\label{ThFLH09}\cite{FLH09}.
Let $y>0$ be fixed. Then, we have the following small-time behaviour for the distribution function of $X_t$:
\label{thm:JoseThm}
$$
\frac{1}{t}\,\mathbb{P}(X_t \ge y)= \nu[y,\infty) +\frac{1}{2}t {d_2(y)}
+o(t) \quad \quad \quad \quad (t \to 0)\,,
$$
\nind where
\bq
d_2(y)&=&d_2(y;b,\sigma,{\nu})=-\sigma^2 \nu'(y)+ 2b \nu(y)- \nu[y,\infty)^2+\nu(\frac{1}{2}y,y)^2 \nn \\ &+& 2\int_{-\infty}^{-\frac{1}{2}y} \int_{y-x}^y \nu(u) \nu(x) du dx - 2\nu(y)\int_{\frac{1}{2}y <|x|<1} x\nu(x)  dx +  2\nu(y)\int_{-\frac{1}{2}y}^{\frac{1}{2}y}\int_{y-x}^y (\nu(u)-\nu(y))\nu(x)du dx\,.\label{ExprCoefd2}
\eq
\nind
Furthermore, if the pure-jump component of $(X_t)$ has finite variation, i.e. $\int_{|x|\le 1} |x|\nu(dx)<\infty$, then $d_2$ simplifies to
\begin{equation}\label{ScndCoefBndVar}
d_2(y)=d_2(y;b,\sigma,\nu)=-\sigma^2 \nu'(y)+ 2b_0 \nu(y)- \nu[y,\infty)^2+ \int_0^y \int_{y-x}^y \nu(u) \nu(x) du dx - 2\int_y^{\infty} \int_{-\infty}^{y-x} \nu(u)  \nu(x) du dx\,,
\end{equation}
\nind where $b_0$ is the drift of the {pure-jump component of $(X_{t})$} defined by $b_{0}=b-\int_{|x|\le 1} x \nu(dx)$.
\end{thm}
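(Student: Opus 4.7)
My plan is to perform a small-$t$ expansion by truncating small jumps at level $\varepsilon>0$ and conditioning on the number of ``large'' jumps. Decompose $X_t=\tilde X_t+J_t^{(\varepsilon)}$, where $J_t^{(\varepsilon)}:=\sum_{s\le t}\Delta X_s\,\mathbf{1}_{|\Delta X_s|>\varepsilon}$ is a compound Poisson process with intensity $\lambda_\varepsilon:=\nu(\{|x|>\varepsilon\})$ and jump law $\nu(dx)\mathbf{1}_{|x|>\varepsilon}/\lambda_\varepsilon$, while $\tilde X_t$ carries the Brownian component $\sigma W_t$, the corrected drift $b_\varepsilon:=b-\int_{\varepsilon<|x|\le 1}x\,\nu(dx)$, and the compensated small jumps. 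Conditioning on the Poisson count $N_t^{(\varepsilon)}$ yields
\[
\mathbb{P}(X_t\ge y)=\sum_{n\ge 0}\frac{(\lambda_\varepsilon t)^n}{n!}\,e^{-\lambda_\varepsilon t}\,\mathbb{P}\bigl(\tilde X_t+Y_1+\cdots+Y_n\ge y\bigr),
\]
with $Y_i$ i.i.d.\ copies of the large-jump law. For $y>0$ fixed, the $n=0$ term is exponentially small in $t$ (since $\tilde X_t$ has jumps bounded by $\varepsilon$), while the $n\ge 3$ terms are $O(t^3)$, so only $n=1$ and $n=2$ contribute up to $o(t^2)$.

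For the $n=1$ term I Taylor-expand $s\mapsto \mathbb{P}(Y_1\ge y-s)$ to second order around $s=0$, which is justified because $\tilde X_t$ concentrates near $0$ at rate $\sqrt{t}$ and $\nu$ is $C^1$ away from the origin. Using $\mathbb{E}[\tilde X_t]=b_\varepsilon t$, $\mathbb{E}[\tilde X_t^2]=\bigl(\sigma^2+\int_{|x|\le\varepsilon}x^2\nu(dx)\bigr)t+O(t^2)$, and $e^{-\lambda_\varepsilon t}=1-\lambda_\varepsilon t+O(t^2)$, the $n=1$ summand contributes
\[
t\,\nu[y,\infty)+t^2\Bigl(b_\varepsilon\nu(y)-\tfrac12\sigma^2\nu'(y)-\tfrac12\nu'(y)\!\!\int_{|x|\le\varepsilon}\!\!x^2\nu(dx)-\lambda_\varepsilon\nu[y,\infty)\Bigr)+o(t^2),
\]
while the $n=2$ summand contributes $\tfrac{t^2}{2}(\nu_\varepsilon\ast\nu_\varepsilon)[y,\infty)$, with $\nu_\varepsilon:=\nu\,\mathbf{1}_{|x|>\varepsilon}$, the $\tilde X_t$ dependence in the two-jump case being an $o(t^2)$ correction.

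Combining, $\tfrac12 d_2(y)$ is the sum of the $t^2$-coefficients just identified. The principal obstacle is the algebraic rearrangement which must simultaneously eliminate the $\varepsilon$-dependence and recast the answer in the form \eqref{ExprCoefd2}: one splits the convolution $(\nu_\varepsilon\ast\nu_\varepsilon)[y,\infty)$ over the regions $\{x<-\tfrac12 y\}$, $\{|x|\le\tfrac12 y\}$, $\{x>\tfrac12 y\}$; uses symmetry on the diagonal piece to combine the $-\nu[y,\infty)^2$ and $\nu(\tfrac12 y,y)^2$ contributions; and, in the finite-variation case, absorbs $\int_{\varepsilon<|x|\le 1}x\,\nu(dx)$ into $b_\varepsilon$ to recover $b_0$ and drop the compensator-related integral. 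In the infinite-variation case the divergences $\lambda_\varepsilon\nu[y,\infty)$ and $\nu'(y)\int_{|x|\le\varepsilon}x^2\nu(dx)$ must cancel against matching divergences inside $(\nu_\varepsilon\ast\nu_\varepsilon)[y,\infty)$, leaving the $\nu(y)\int x\,\nu(dx)$-type corrections that appear in \eqref{ExprCoefd2}. A final technicality is a uniform-in-$\varepsilon$ control of the Taylor remainder in the $n=1$ step, for which the hypothesis $\sup_{|x|>\delta}(|\nu(x)|+|\nu'(x)|)<\infty$ for every $\delta>0$ suffices.
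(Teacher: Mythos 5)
First, a point of reference: the paper does not prove this theorem at all — it is quoted verbatim from Figueroa-L\'opez\&Houdr\'e \cite{FLH09} (Remark 3.3 and Proposition 3.4 there), so your sketch has to be judged on its own merits rather than against an in-paper argument. Your overall strategy (truncate at $\varepsilon$, condition on the number of large jumps, note that $n=0$ and $n\ge 3$ are negligible for $\varepsilon<y/2$, expand the $n=1$ term, keep $\tfrac12(\nu_\varepsilon\ast\nu_\varepsilon)[y,\infty)$ from $n=2$) is a reasonable route, and your regional decomposition of the convolution is essentially how the $-\nu[y,\infty)^2$ and double-integral terms arise.

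There is, however, a genuine gap in the $n=1$ step, and it is exactly where the hardest part of the theorem lives. You replace $\mathbb{E}\bigl[\nu_\varepsilon[y-\tilde X_t,\infty)\bigr]$ by its second-order Taylor proxy using only $\mathbb{E}\tilde X_t$ and $\mathbb{E}\tilde X_t^2$, claiming the remainder is $o(t)$ ``because $\tilde X_t$ concentrates at rate $\sqrt t$''. That is false at fixed $\varepsilon$: $\tilde X_t$ still contains the compensated jumps of size up to $\varepsilon$, so for $0<\delta<\varepsilon$ one has $\mathbb{P}(|\tilde X_t|>\delta)\sim t\,\nu(\{\delta<|x|\le\varepsilon\})$, not a superpolynomially small quantity, and in fact $\tfrac1t\mathbb{E}[R(\tilde X_t)]\to\int_{|x|\le\varepsilon}R(x)\,\nu(dx)\neq 0$, where $R(x)=\nu[y-x,\infty)-\nu[y,\infty)-x\,\nu(y)+\tfrac{x^2}{2}\nu'(y)$. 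Equivalently, the one-jump term must be expanded with the full generator of the small-jump part, $b_\varepsilon g'(0)+\tfrac{\sigma^2}{2}g''(0)+\int_{|x|\le\varepsilon}[g(x)-g(0)-x g'(0)]\,\nu(dx)$, not with a diffusion surrogate $\tfrac12 g''(0)\int_{|x|\le\varepsilon}x^2\nu(dx)$. Your list of $t^2$-coefficients therefore misses the term $t^2\int_{|x|\le\varepsilon}R(x)\,\nu(dx)$, and without it the candidate coefficient genuinely depends on $\varepsilon$: comparing two truncation levels $\varepsilon_2<\varepsilon_1<y/2$, your terms differ by $-\int_{\varepsilon_2<|x|\le\varepsilon_1}R(x)\nu(x)\,dx$, which is nonzero in general, so no ``algebraic rearrangement'' can turn your expression into the $\varepsilon$-free formula \eqref{ExprCoefd2}. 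Relatedly, your proposed cancellation of $\nu'(y)\int_{|x|\le\varepsilon}x^2\nu(dx)$ against ``divergences inside $(\nu_\varepsilon\ast\nu_\varepsilon)[y,\infty)$'' cannot happen (that integral is finite and the convolution only involves jumps larger than $\varepsilon$); the terms of \eqref{ExprCoefd2} such as $-2\nu(y)\int_{\frac12 y<|x|<1}x\nu(x)dx$ and $2\nu(y)\int_{-\frac12 y}^{\frac12 y}\int_{y-x}^{y}(\nu(u)-\nu(y))\nu(x)\,du\,dx$ are precisely the fingerprint of the generator correction you dropped (with cutoff at $y/2$). Once that term is restored — and the tail of $\tilde X_t$ is controlled by concentration only for thresholds beyond the truncation level, with $\varepsilon<y/2$ fixed — the bookkeeping closes and your convolution decomposition does deliver \eqref{ScndCoefBndVar} in the finite-variation case; as written, though, the key $t^2$-coefficient is simply wrong. (A minor additional point: you invoke $\sup_{|x|>\delta}|\nu'(x)|<\infty$, which is not among the hypotheses; local continuity of $\nu'$ near $y$ is what is actually available and is what the remainder estimate should use.)
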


\begin{rem}
{
	The double integrals in (\ref{ExprCoefd2}) and (\ref{ScndCoefBndVar}) are well-defined. For instance, by the symmetry of $s(u)s(x)$ about the line $u=x$,
	\begin{align}
		\int_0^y \int_{y-x}^y \nu(u) \nu(x) du dx&=\int_0^{y/2} \int_{y-x}^y \nu(u)\nu(x) du dx+\int_{y/2}^{y} \int_{y-x}^{y/2} \nu(u)\nu(x) du dx+\int_{y/2}^{y} \int_{y/2}^{y} \nu(u)\nu(x) du dx \nn
		\\
		&=2\int_0^{y/2} \int_{y-x}^y \nu(u)\nu(x) du dx+\int_{y/2}^{y} \int_{y/2}^{y} \nu(u)\nu(x) du dx\label{HTFTH}
		\\
		&\leq 2[\sup_{u\in(y/2,y)}\nu(u)]\int_0^{y/2} x \nu(x)dx+[(y/2)
		\sup_{u\in(y/2,y)}\nu(u)]^{2},\nn
	\end{align}
	which is finite because $\int_{|x|\leq{}1}|x|\nu(x)dx<\infty$ (being $X$ a bounded variation process).  To obtain the bound on the first integral (\ref{HTFTH}), we  used the fact that the range for $x$ is from $0$ to $y/2$ and, hence, the maximal range for u in the inner integral is $u\in[y/2,y]$.
Similarly, by Fubini's theorem, 
	\begin{align*}
		\int_y^\infty \int_{-\infty}^{y-x} \nu(u) \nu(x) du dx&=\int_{-\infty}^0 \int_{y}^{y-u} \nu(u) \nu(x) dx du\\
		&\leq \int_{-1}^0 \int_{y}^{y-u} \nu(x) dx  \nu(u)du+\int_{-\infty}^{-1}\int_{y}^{y-u} \nu(x) dx  \nu(u)du\\
		&\leq [\sup_{x>y}\nu(x)]\int_{-1}^{0} (-u) \nu(u)du+
		\int_{-\infty}^{-1} \nu(u)du\int_{y}^{\infty} \nu(x) dx<\infty.
	\end{align*}
	}
\end{rem}

In the following proposition, we use Theorem \ref{thm:JoseThm} to establish a small-time estimate for the price of an out-of-the-money call option under the model in \eqref{eq:FVLevyModel}:
\begin{prop}
\label{prop:SmallTimeCallsLevy}
Assume that \be\label{eq:ExpCondition}(i)\;\int_{|x|>1} e^x \nu(x)dx<\infty\,
\quad \text{and}\quad (ii)\;\sup_{|x|>\varepsilon} e^{x}\nu(x)<\infty,\ee \nind
for any $\varepsilon>0$. Then we have the following small-time expansion for the price of a call option with strike $K>S_0$
\be
\label{eq:CallOptionAsymptotic}
\frac{1}{t}\,\mathbb{E}(S_t-K)_{+}\,=\,S_0 \int_{-\infty}^{\infty} (e^x-e^k)_{+}\nu(x)dx \,+\, \frac{1}{2}S_0 \big[d_2^*(k)- e^k d_2(k)\big]t+o(t) \quad \quad \quad \quad (t \to 0)\,,
\ee
\nind
where $k=\log \frac{K}{S_0}>0$ is the log-moneyness and $d_2^*(k)=d_2(k;b^{*},\sigma,\nu^*)$ with $b^*$ and ${\nu^*}$ given by
\begin{equation}\label{NLevyTriplet}
	{\nu^{*}(x)=e^{x}\nu(x)}\quad\text{and}\quad
	b^{*}=b+\int_{|x|\leq{}1} x \left(e^{x}-1\right)\nu(x)dx +\sigma^{2}.
\end{equation}
\end{prop}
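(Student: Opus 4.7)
The natural strategy is to reduce the out-of-the-money call price to a pair of tail probabilities and then apply Theorem \ref{thm:JoseThm} to each one. The first step is the classical change of num\'eraire: since condition \eqref{eq:Martingale} makes $(e^{X_{t}})$ a true martingale with mean one, define the share measure $\mathbb{P}^{*}$ via $d\mathbb{P}^{*}/d\mathbb{P}|_{\mathcal{F}_{t}} = e^{X_{t}}$. Splitting the payoff as $(S_{t}-K)_{+} = S_{t}\mathbf{1}_{\{X_{t}>k\}} - K\mathbf{1}_{\{X_{t}>k\}}$ and using $\mathbb{E}(S_{t}\mathbf{1}_{A}) = S_{0}\mathbb{P}^{*}(A)$, one obtains the basic representation
\[
\frac{1}{S_{0}}\mathbb{E}(S_{t}-K)_{+} \;=\; \mathbb{P}^{*}(X_{t}>k) \;-\; e^{k}\,\mathbb{P}(X_{t}>k).
\]

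The second step is to identify the law of $X$ under $\mathbb{P}^{*}$. This is an Esscher transform with parameter $1$, so $X$ remains a L\'evy process under $\mathbb{P}^{*}$ with modified triplet $(\sigma^{2},b^{*},\nu^{*})$. A direct computation of the characteristic exponent $\psi(\xi-i)$ (using \eqref{eq:Martingale} to eliminate the constant term and adding/subtracting the truncation $i\xi x\mathbf{1}_{|x|\le1}$ against the new measure $\nu^{*}(dx)=e^{x}\nu(dx)$) shows that the drift picks up a contribution $\sigma^{2}$ from the Gaussian part together with $\int_{|x|\le1}x(e^{x}-1)\nu(x)dx$ from the compensator mismatch, yielding exactly the $b^{*}$ in \eqref{NLevyTriplet}. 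One then checks that the triplet $(\sigma^{2},b^{*},\nu^{*})$ satisfies the standing hypotheses of Theorem \ref{thm:JoseThm}: positivity and the $C^{1}$ property of $\nu^{*}(x)=e^{x}\nu(x)$ are inherited from $\nu$, while assumption (ii) in \eqref{eq:ExpCondition} ensures $\sup_{|x|>\varepsilon}\nu^{*}(x)<\infty$, and assumption (i) gives the required tail integrability.

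Once this is in place, Theorem \ref{thm:JoseThm} applies separately to both measures and gives
\[
\frac{1}{t}\mathbb{P}(X_{t}>k) \;=\; \nu[k,\infty) + \tfrac{1}{2}d_{2}(k)\,t + o(t), \qquad
\frac{1}{t}\mathbb{P}^{*}(X_{t}>k) \;=\; \nu^{*}[k,\infty) + \tfrac{1}{2}d_{2}^{*}(k)\,t + o(t).
\]
Subtracting $e^{k}$ times the first from the second and multiplying by $S_{0}$ yields the stated expansion, once one recognizes the leading term as
\[
\nu^{*}[k,\infty) - e^{k}\nu[k,\infty) \;=\; \int_{k}^{\infty}(e^{x}-e^{k})\nu(x)dx \;=\; \int_{-\infty}^{\infty}(e^{x}-e^{k})_{+}\nu(x)dx.
\]

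The main obstacle is the second step: one must compute $b^{*}$ correctly and verify that $\nu^{*}$ inherits all the regularity and integrability hypotheses underlying Theorem \ref{thm:JoseThm}. The bookkeeping for $b^{*}$ requires care because neither $\int(e^{x}-1)\nu(dx)$ nor $\int x\mathbf{1}_{|x|\le1}\nu(dx)$ need converge separately, so one has to rearrange $e^{x}e^{i\xi x}-1-i\xi x\mathbf{1}_{|x|\le1}-x\mathbf{1}_{|x|\le1}$ into a sum of terms each of which is well-defined against $\nu$. Everything after that (applying the theorem and combining) is a straightforward substitution.
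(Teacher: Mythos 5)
Your proposal is correct and follows essentially the same route as the paper: the share-measure decomposition $\frac{1}{S_0}\mathbb{E}(S_t-K)_{+}=\mathbb{P}^{*}(X_t\ge k)-e^{k}\mathbb{P}(X_t\ge k)$, identification of the Esscher-transformed triplet $(b^{*},\sigma^{2},\nu^{*})$ (which the paper simply cites from Sato's Theorem 33.1), and application of Theorem \ref{thm:JoseThm} under both measures. Your added care in verifying that $\nu^{*}=e^{x}\nu$ inherits the standing hypotheses via conditions (i)--(ii) is exactly the point the paper leaves implicit, so there is nothing to correct.
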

\begin{rem}
This result sharpens the asymptotic behavior (\ref{ARFO}), {established by Levendorskii \cite{Lev08} for a class of multi-factor L\'{e}vy and Markov models under certain technical conditions. As explained in the introduction, for a L\'evy model, these conditions were relaxed by Roper\cite{Rop10} and Tankov \cite{Tnkv10}.}
Note also that, by imposing that $\nu$ has a positive L\'{e}vy density, we are precluding the Black-Scholes case where there is a non-zero diffusion component with volatility $\sigma$ and \textit{zero} jump component, for which the implied volatility is just constant and equal to $\sigma$.
\end{rem}
\begin{proof}
Without loss of generality, we assume that $(X_{t})$ is the canonical process $X_{t}(\omega)=\omega(t)$ defined on $\Omega=\mathbb{D}([0,\infty),\mathbb{R})$ (the space of right-continuous functions with left limit $\omega:[0,\infty)\to\mathbb{R}$) and equipped with the $\sigma-$field $\mathcal{F}=\sigma(X_{s}:s\geq{}0)$ and the right-continuous filtration $\mathcal{F}_{t}:=\cap_{s>t}\sigma(X_{u}:u\leq{}s)$.
{Following the density transformation construction of Sato\cite{Sat99} (see Definition 33.4 and Example 33.4 therein) and using the martingale condition \eqref{eq:Martingale}, we define $\mathbb{P}^{*}$ {on $(\Omega,\mathcal{F})$} such that
\begin{equation}\label{DSM}
	\mathbb{P}^{*}(B)=\mathbb{E}\left(e^{X_{t}} 1_{B} \right),
\end{equation}
for any $t>0$ and $B\in\mathcal{F}_{t}$.  As explained in the introduction, {we can interpret $\mathbb{P^*}$ as} the martingale measure associated with using the stock price as the num\'{e}raire}.

Let us first note that the price of a call option {can be decomposed as follows}:
\bq
\mathbb{E}(S_t-K)_{+}&=& \mathbb{E}(S_t 1_{S_t \ge K})-K \mathbb{P}(S_t \ge K)\nn \\
&=& S_{0}\mathbb{E}(e^{X_t} 1_{S_t \ge K})-S_0 e^k \mathbb{P}(X_t \ge k) \nn \\
&=& S_0\mathbb{P}^*(X_t \ge k)-S_0 e^k \mathbb{P}(X_t \ge  k)
\label{eq:LevyCallDecomposition}
\eq
One can check that $(X_{t})$ is a L\'evy process under $\mathbb{P}^{*}$ with characteristic triplet $(b^{*},\sigma^{2},\nu^{*})$.
For this result see the more general Theorem 33.1 in Sato\cite{Sat99}.  Finally,  applying Theorem \ref{thm:JoseThm} to the probabilities under $\mathbb{P}$ and $\mathbb{P}^*$ in \eqref{eq:LevyCallDecomposition}, we have
\be
{\frac{1}{t}\,\mathbb{E}(S_t-K)_{+}\,=\,S_0 \int_{k}^{\infty} e^x\nu(x)dx -K  \int_{k}^{\infty}\nu(x)dx  \,+\, \frac{1}{2}S_0 d_2^*(k) t- \frac{1}{2}K d_2(k)t+o(t) \quad \quad \quad \quad (t \to 0)\,,}
\ee
\nind which simplifies to \eqref{eq:CallOptionAsymptotic}.
\end{proof}

{
\begin{rem}\label{RmrDrftVolEffct}
	Let us note that for a bounded variation process, the drift of $X$ under the Share measure $\mathbb{P}^*$ is the same as the drift under the measure $\mathbb{P}$. Indeed, denoting $b^{*}_{0}$ the drift under  $\mathbb{P}^*$, we have that
	\[
		b^{*}_{0}=b^*-\int_{\{|x|\leq{}1\}} x\nu^*(x)dx=b+\int_{|x|\leq{}1} x \left(e^{x}-1\right)\nu(x)dx
		-\int_{\{|x|\leq{}1\}} xe^{x}\nu(x)dx=b_{0}.
	\]
	Also, note that the call price approximation (\ref{eq:CallOptionAsymptotic}) is independent of $b$. Indeed, let
	\[
		R(y;\nu):=d_2(k;b,\sigma,\nu)-\left(-\sigma^{2}\nu'(y)+2 b \nu(y)\right),
	\]	
	which depends only on $\nu$ as seen from the expression of $d_{2}$ in (\ref{ExprCoefd2}).
	Then, using (\ref{NLevyTriplet}), the second order term in (\ref{eq:CallOptionAsymptotic}) can be simplified as follows
	\begin{align*}
		a_1(k):= a_1(k;b,\sigma,\nu)&:=\frac{1}{2}\big[d_2(k;b^{*},\sigma,\nu^*)-e^{k} d_2(k;b,\sigma,\nu)\big]\\
		&=\frac{\sigma^{2}}{2}e^{k}\nu(k)
		+  e^{k}\nu(k)\int_{|x|\leq{}1} x \left(e^{x}-1\right)\nu(x)dx+
		\frac{1}{2}\left[R(k;\nu^{*})-e^{k}R(k;\nu)\right],
	\end{align*}
which does not depend on $b$. The previous expression also shows that
	\[
		a_1(k;b,\sigma,\nu)-a_1(k;b,0,\nu)=\frac{\sigma^{2}}{2} e^{k}\nu(k),
	\]
and, hence, a non-zero volatility of $\sigma$ has the effect of increasing the call price approximation by $\frac{\sigma^{2}	t^{2}}{2} e^{k}\nu(k)$.
\end{rem}
}

\subsection{Implied volatility}\label{SSec:ImplVol}

  Let $\hat{\sigma}_t(k)$ denote the Black-Scholes implied volatility at log-moneyness $k$ {and} maturity $t$ with zero interest rates, and let $V(t,k)=\hat{\sigma}_t(k)^2 t$ denote the dimensionless \textit{implied variance}.  Let
 \bq\label{TermFirstSecondTerms}
  a_0(k):= \int_{-\infty}^{\infty} (e^x-e^k)_{+}{\nu(dx)}
  \quad \text{and}\quad
   a_1(k):= \frac{1}{2}\big[d_2^*(k)- e^k d_2(k)\big]
\eq
denote the (normalized) leading order and correction terms in \eqref{eq:CallOptionAsymptotic}.  By put-call parity, the dominated convergence theorem, {and the stochastic continuity of the L\'evy process $(X_{t})$}, we have
$$
\lim_{t \to 0}\mathbb{E}(S_t-K)_{+} = (S_0-K)_{+}\,,
$$
\nind and from this we can show that $V(t,k) \to 0$ as $t \to 0$.  The following corollary shows more precisely how $V(t,k)\to 0$ as $t \to 0$ {and, hence,} sharpening a result in Tankov \cite{Tnkv10} (Proposition 4 therein):
\begin{thm}
\label{thm:LevyImpliedVariance}
For the exponential L\'{e}vy model in \eqref{eq:FVLevyModel}, we have the following small-time behavior for the implied variance $V(t,k)$ for $k>0$
\be
\label{eq:ImpliedVarianceLevy}
V(t,k)= V_0(t,k) \big[1+
V_1(t,k) \,+\, o(\frac{1}{\log\frac{1}{t}}) \big]
 \quad \quad \quad \quad (t \to 0),
\ee
\nind
where
\bq
V_0(t,k)&=& \frac{\frac{1}{2}k^2}{\log (\frac{1}{t})}\,, \nn \\
V_1(t,k)&=&  \frac{1}{\log(\frac{1}{t})}\log\left[\frac{4 \sqrt{\pi}a_{0}(k)e^{-k/2}}{k}\left[\log\left(\frac{1}{t}\right)\right]^{3/2}\right]
\,.\eq
\end{thm}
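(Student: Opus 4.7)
The plan is to invert the implicit Black--Scholes identity. Writing the BS call price as $C_{BS}(t,k,\hat\sigma)=S_{0}N(d_{1})-KN(d_{2})$ with $d_{1,2}=(-k\pm V/2)/\sqrt{V}$, I would set this equal to $S_{0}a_{0}(k)t(1+O(t))$ via Proposition \ref{prop:SmallTimeCallsLevy}, and then solve for $V=V(t,k)$ by bootstrapping. Writing $L:=\log(1/t)$ so that $t=e^{-L}$, observe that $a_{1}(k)t$ is exponentially smaller than any power of $1/L$; the $a_{1}$ term is therefore irrelevant at the required precision, and only $a_{0}(k)$ enters the final expansion.

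The first step is to expand the Black--Scholes side in the regime $V\to 0^{+}$, $k>0$ fixed, in which both $d_{1}$ and $d_{2}$ tend to $-\infty$. Using the Mills ratio $N(-x)=\phi(x)x^{-1}(1+O(x^{-2}))$ together with the cancellation $S_{0}e^{-d_{1}^{2}/2}=Ke^{-d_{2}^{2}/2}=S_{0}\exp\{-k^{2}/(2V)+k/2-V/8\}$ and the identity $|d_{2}|-|d_{1}|=\sqrt{V}$, one obtains
\be
\frac{C_{BS}}{S_{0}}=\frac{V^{3/2}}{k^{2}\sqrt{2\pi}}\,e^{-k^{2}/(2V)+k/2-V/8}\bigl(1+O(V)\bigr).
\ee
Equating this with $a_{0}(k)t(1+o(1))$ and taking logarithms yields the implicit relation
\be
\frac{k^{2}}{2V}=L+\tfrac{3}{2}\log V+\tfrac{k}{2}-\log\!\bigl[k^{2}\sqrt{2\pi}\,a_{0}(k)\bigr]+o(1).
\ee

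Next I would bootstrap. Tankov's leading-order estimate $V\sim k^{2}/(2L)$ gives $\log V=-\log L+O(1)$, and substituting this back rewrites the right-hand side as $L-\tfrac{3}{2}\log L+\log[ke^{k/2}/(4\sqrt{\pi}\,a_{0}(k))]+o(1)$. Inverting via $1/(1-x)=1+x+O(x^{2})$ with $x=O((\log L)/L)$ then produces
\be
V=\frac{k^{2}}{2L}\left[1+\frac{1}{L}\log\!\left(\frac{4\sqrt{\pi}\,a_{0}(k)e^{-k/2}}{k}L^{3/2}\right)+o\!\left(\frac{1}{L}\right)\right],
\ee
which is the desired identity.

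The hard part will be making the bootstrap fully rigorous: the $o(1)$ error in log-space must be upgraded to the stated $o(1/L)$ multiplicative remainder. A clean route is to exploit strict monotonicity of $C_{BS}$ in $\hat\sigma$ to sandwich $V(t,k)$ between two explicit auxiliary functions obtained from tight two-sided Mills ratio bounds; both admit closed-form asymptotic inversion and can be shown to converge to the same limit at the rate $o(1/L)$.
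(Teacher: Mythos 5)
Your proposal is correct and follows essentially the same route as the paper: equate the first-order L\'evy price $a_0(k)t$ with the small-volatility (Mills-ratio) Black--Scholes asymptotic $\frac{e^{k/2}V^{3/2}}{\sqrt{2\pi}k^2}e^{-k^2/(2V)}(1+O(V))$, take logarithms, and bootstrap from the leading order $V\sim k^2/(2\log\frac1t)$, with rigor supplied by two-sided sandwich bounds. The step you flag as ``hard'' is handled in the paper exactly as you suggest, via $\delta$-perturbed upper and lower price inequalities that are solved for the correction term $\tilde V_1$, and your observation that the additive $o(1)$ error in $k^2/(2V)$ converts directly into the $o(1/\log\frac1t)$ multiplicative remainder is the key (and correct) point.
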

\begin{proof}
See Appendix \ref{section:Proofs}.
\end{proof}
\begin{rem}
Multiplying \eqref{eq:ImpliedVarianceLevy} by $1/t$, we have the following expansion for the implied volatility
\be
\label{eq:ImpliedVolatilityLevy}
\hat{\sigma}^2_t(k)= \frac{\frac{1}{2}k^2}{t\log(\frac{1}{t})}\big[1 \,+\,
V_1(t,k) \,+\, o(\frac{1}{\log\frac{1}{t}})]\,
 \quad \quad \quad \quad (t \to 0),
\ee
\nind and we see that $\hat{\sigma}^2_t(k)\to \infty$
as {$t \to{}0^{+}$}, as is well documented in e.g. Carr\&Wu\cite{CW03}
{(see also Roper\cite{Rop10} and \cite{Tnkv10})}.  The leading order term agrees with that obtained in Tankov\cite{Tnkv10} and, moreover, we see that
$$
 [t\log(\frac{1}{t})]^{\frac{1}{2}}\hat{\sigma}_t(k) \sim |k|/\sqrt{2},\quad \quad \quad \quad (t \to 0)\,,
$$
\nind so the (re-scaled) leading order implied volatility smile is V-shaped and independent of $\nu$, except that we require $\nu$ to be non-zero.
\end{rem}
\begin{rem}  $V(t,k)=O(\frac{1}{\log\frac{1}{t}})$, so $V(t,k)\to 0$ but slowly; in fact slower than $t^p$ for any $p>0$. In particular, for a given desired ``precision" bound $\varepsilon\ll 1$, we will need $t=O(e^{-1/\epsilon})$ to ensure that $V(t,k) =O(\epsilon) $ and for the $\frac{1}{\log\frac{1}{t}}$ error term in \eqref{eq:ImpliedVarianceLevy} to be $O(\epsilon)$.  For this reason, the call option estimate \eqref{eq:CallOptionAsymptotic} is more useful than the implied volatility estimate  \eqref{eq:ImpliedVolatilityLevy} in practice. {We remark that in Corollary 8.3 of the very recent article by Gao\&Lee\cite{GL11}, the authors give an expansion which sharpens \eqref{eq:ImpliedVarianceLevy}, but proving their result is more involved and
requires several preliminary lemmas}
\end{rem}
\begin{rem}\label{RmrkBlowing}
	Based on high-frequency statistical methods for It\^{o} semimartingales, several empirical studies have statistically rejected the null hypothesis of either a purely-jump or a purely-continuous model (see, e.g., \cite{AJ09b}, \cite{AJ10}, \cite{BNS06}).  If this {really is} the case, then our results show that theoretically, the small-maturity smile must tend to infinity, if put/call options are priced correctly.  Nevertheless, this effect is often obscured in reality by market practicalities - high bid/offer spreads, daycount/settlement conventions, and times when the market is closed.  However, even if we cannot trade an option with infinitesimally small maturity in practice, we can still look at rate at which the implied volatility smile steepens as the maturity goes small; typically it is difficult to fit the one of the fashionable class of purely continuous models (e.g. Heston, SABR, and other local-stochastic volatility hybrid models) to this kind of data, with realistic parameters.  Carr\&Wu\cite{CW03}'s study of S\&P 500 \textit{option price} data (in contrast to the previous \textit{statistical} approaches) also suggests that the sample path of the index contains both continuous and discontinuous martingale components (working under a risk neutral measure), and that, while the presence of the jump component varies strongly over time, the continuous component is omnipresent. \\
\indent {In the same vein,} A\"it-Sahalia\&Jacod\cite{AJ09a} define a \textit{jump activity index} to test for the presence of jumps, which for a L\'{e}vy process coincides with the Blumenthal-Getoor index of the process.  \cite{AJ09a} also {proposes} estimators of this index for a discretely sampled process and derive the estimators' properties.
These estimators are applicable despite the presence of a Brownian component
in the process, which makes it more challenging to infer the characteristics
of the small, infinite activity jumps. When the method is applied to high frequency
stock returns, \cite{AJ09a} found evidence of infinitely active jumps in the data
and they were able to estimate the index of activity.
\end{rem}

\section{Time-changed L\'{e}vy processes}
\label{section:SmallTimeTimeChange}
\subsection{A formula for out-of-the-money call option prices}
In addition to the L\'evy process $(X_{t})$ of Section \ref{section:SmallTimeLevy}, we now consider a random clock $(T_{t})$ {defined on $(\Omega,\mathbb{P},\mathcal{F})$ and independent of $X$}. A random clock is a right-continuous non-decreasing process such that $T_{0}=0$. We consider a time-changed L\'evy model of the form
\begin{equation}\label{TCM}
	S_{t}:=S_{0}e^{Z_{t}}, \quad {\text{with}\quad  Z_{t}:=X_{T_{t}}}.
\end{equation}
As explained in the introduction, this type of model is important because it can incorporate volatility clustering effects.

Given that $e^{X_{t}}$ is a martingale under $\mathbb{P}$ (relative to the natural filtration generated by $X$), it is known that $(S_{t})$ above is a martingale under $\mathbb{P}$ relative to the natural filtration generated by the random clock $T_{t}$ and the time-changed process $Z_{t}$ (see Lemma 15.2 in \cite{CT04}).  Note also that our simplifying assumption (\ref{eq:Martingale}) implies that
\be \label{TCMFSA}
S_t=S_0\frac{e^{Z_t}}{\mathbb{E}(e^{Z_t})}
\ee
because $\mathbb{E}(e^{Z_t})=1$.  {\cite{CGMY03} (Section 4.2) shows that the price process (\ref{TCMFSA}) is free of static arbitrage opportunities. Furthermore,}  under certain conditions (e.g. if $X$ has infinite jump activity and $(T_t)$ is continuous), $\sigma(T_{u}:u\leq{}t)\subset \sigma(X_{T_{u}}:u\leq{}t)$ (see, e.g., Theorem 1 in \cite{Win01}), and hence \eqref{TCM} will be a martingale relative to the filtration generated by only the time-changed process $(Z_{t})$ or, equivalently, the filtration generated by the stock-price process $(S_{t})$. In that case, the model (\ref{TCM}) will be free of dynamic arbitrage opportunities by {the sufficiency part of {the} First  Fundamental Theorem of Asset Pricing}.

Let $\mathcal{N}$ be the set of $\mathbb{P}$-null sets of $\mathcal{F}$ and define a probability measure $\widetilde{\mathbb{P}}$ on $\widetilde{\mathcal{F}}:=\sigma(Z_{t}, T_{t}:t>{}0)\vee \mathcal{N}$ such that, for any $t>0$,
\begin{equation}\label{DSM}
	\widetilde{\mathbb{P}}(B)=\mathbb{E}\left(e^{Z_t} 1_{B} \right),
\end{equation}
whenever  $B\in\widetilde{\mathcal{F}}_{t}:=\sigma(Z_{u},T_{u}:u\leq t)\vee \mathcal{N}$.  We note that $\widetilde{\mathbb{P}}$ is well defined since $\{e^{Z_{t}}\}_{t\geq{}0}$ is a $\mathbb{P}$-martingale relative to $\{\widetilde{\mathcal{F}}_{t}\}_{t\geq{}0}$.
The following proposition will play a key role in the sequel:
\begin{prop}
\label{prop:OptionPriceReprTimeChangedLevy}
Suppose that the assumptions of Proposition \ref{prop:SmallTimeCallsLevy} are satisfied and let $(b^{*},\sigma^{2},\nu^{*})$ be defined as in (\ref{NLevyTriplet}). Then, under $\widetilde{\mathbb{P}}$, the process $\left(Z_t\right)$ in (\ref{TCM}) has the same distribution as a L\'{e}vy process with the characteristic triplet $(b^{*},\sigma^{2},\nu^{*})$  evaluated at the independent random clock $T_t$.
\end{prop}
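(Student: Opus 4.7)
The natural strategy is to identify the law of $(Z, T)$ under $\widetilde{\mathbb{P}}$ via finite-dimensional characteristic functions, exploiting two facts: (i) under $\mathbb{P}$, conditionally on the entire path of $T$, the process $Z$ inherits the independent-increments structure of $X$ evaluated at deterministic times; (ii) the measure change density $e^{Z_t}$ acts, conditionally on $T$, exactly as the Esscher transform for $X$ that already appeared in the proof of Proposition~\ref{prop:SmallTimeCallsLevy} and sends the triplet $(b,\sigma^2,\nu)$ to $(b^*,\sigma^2,\nu^*)$.

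Concretely, fix $0=t_0<t_1<\cdots<t_n$ and real numbers $u_1,\dots,u_n,v_1,\dots,v_n$, and consider
\[
J:=\widetilde{\mathbb{E}}\exp\Bigl(i\sum_{j=1}^n u_j(Z_{t_j}-Z_{t_{j-1}})+i\sum_{j=1}^n v_j(T_{t_j}-T_{t_{j-1}})\Bigr).
\]
By definition of $\widetilde{\mathbb{P}}$ applied at $t=t_n$, and writing $Z_{t_n}=\sum_j(Z_{t_j}-Z_{t_{j-1}})$, this equals
\[
J=\mathbb{E}\exp\Bigl(\sum_{j=1}^n(1+iu_j)(Z_{t_j}-Z_{t_{j-1}})+i\sum_{j=1}^n v_j(T_{t_j}-T_{t_{j-1}})\Bigr).
\]
Conditioning on the $\sigma$-algebra generated by $T$, and using the independence of $X$ and $T$ together with the independent-increments property of $X$, the conditional expectation factorises into increments $X_{T_{t_j}}-X_{T_{t_{j-1}}}$, each with characteristic exponent of the $\mathbb{P}$-L\'evy process evaluated at the complex argument $-i(1+iu_j)$.

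The next step is to identify this as the Esscher-transformed exponent. By the martingale condition \eqref{eq:Martingale} and the same argument as in Proposition~\ref{prop:SmallTimeCallsLevy} (via Theorem~33.1 in Sato\cite{Sat99}),
\[
\mathbb{E}\bigl[e^{(1+iu)X_s}\bigr]=\mathbb{E}\bigl[e^{X_s}\bigr]\,\mathbb{E}^{*}\bigl[e^{iuX_s}\bigr]=\exp\bigl(s\,\phi^{*}(u)\bigr),
\]
where $\phi^{*}$ is the L\'evy exponent associated with the triplet $(b^{*},\sigma^{2},\nu^{*})$. Substituting and collecting terms yields
\[
J=\mathbb{E}\exp\Bigl(\sum_{j=1}^{n}(T_{t_j}-T_{t_{j-1}})\bigl(\phi^{*}(u_j)+iv_j\bigr)\Bigr),
\]
which is exactly the joint characteristic function of the increments of $(X^{*}_{T_\cdot},T)$, where $X^{*}$ is a L\'evy process with triplet $(b^{*},\sigma^{2},\nu^{*})$ independent of the random clock $T$, and where $T$ retains its $\mathbb{P}$-law under $\widetilde{\mathbb{P}}$. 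Since finite-dimensional distributions of the increments determine the law of $(Z,T)$ on $\widetilde{\mathcal{F}}=\sigma(Z_t,T_t:t\ge 0)\vee\mathcal{N}$, the claim follows.

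\textbf{Expected obstacle.} The main technical care is in justifying the conditioning and the interchange above. One should verify integrability of $e^{(1+iu)X_s}$ (automatic since $|e^{(1+iu)X_s}|=e^{X_s}$ and $\mathbb{E}e^{X_s}=1$) and argue cleanly that the $\mathbb{P}$-law of the clock $T$ is preserved under $\widetilde{\mathbb{P}}$; this follows by taking $u_1=\dots=u_n=0$ in the computation above, since the density $e^{Z_{t_n}}$ averages to $1$ conditionally on $T$. Everything else reduces to the Esscher transform identity already invoked in Proposition~\ref{prop:SmallTimeCallsLevy}.
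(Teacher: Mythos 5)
Your proposal is correct and follows essentially the same route as the paper: both compute the finite-dimensional characteristic functions of the increments under $\widetilde{\mathbb{P}}$, use the independence of $X$ and $T$ to condition on the clock, and invoke the Esscher identity $\psi(u-i)=\psi^{*}(u)$ (via the martingale condition and Sato's Theorem 33.1) to recognize the exponent of the triplet $(b^{*},\sigma^{2},\nu^{*})$. Your inclusion of the clock increments $T_{t_j}-T_{t_{j-1}}$ in the joint characteristic function is a slightly more explicit bookkeeping of the fact that $T$ keeps its $\mathbb{P}$-law and remains independent, but it is not a different argument.
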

\begin{proof}
 	Fix $0=t_{0}<\dots<t_{n}= t<\infty$ and $u_{1},\dots,u_{n}\in\mathbb{R}$. Then, using the independence between $T$ and $X$,
\begin{align*}
	\widetilde{\mathbb{E}}(\exp\{i\sum_{j=1}^{n}u_{j} (Z_{t_{j}}-Z_{t_{j-1}})\})&=\mathbb{E}(\exp\{Z_{t}+i\sum_{j=1}^{n}u_{j} (Z_{t_{j}}-Z_{t_{j-1}})\})
=\mathbb{E}(\exp\{\sum_{j=1}^{n}i(u_{j}-i) (X_{T_{t_{j}}}-X_{T_{t_{j-1}}})\})\\
	&=\mathbb{E}(\exp\{\sum_{j=1}^{n} (T_{t_{j}}-T_{t_{j-1}})\psi(u_{j}-i)\})=\mathbb{E}(\exp\{\sum_{j=1}^{n} (T_{t_{j}}-T_{t_{j-1}})\psi^{*}(u_{j})\})\,.
\end{align*}
The last expression corresponds to the characteristic function of a process of the form $X^{*}_{T_{t}}$, where $(X^{*}_{t})$ is a L\'evy process with triplet $(b^{*},\sigma^{2},\nu^{*})$ defined on $(\Omega,\mathbb{P},\mathcal{F})$ and \emph{independent} of the random clock $(T_{t})$.
\end{proof}

In light of the previous result, we have the following representation for call option prices:
\bq
\mathbb{E}(S_t-K)_{+}&=& \mathbb{E}(S_t 1_{S_t \ge K})-K \mathbb{P}(S_t \ge K)\nn \\
&=& S_0\mathbb{E}(\exp(Z_{t}) 1_{Z_{t} \ge k})-S_{0} e^{k} \mathbb{P}(S_t \ge K).\nn\\
&=& S_0 \widetilde{\mathbb{P}} \left(Z_t \ge k\right)-S_0 e^k \mathbb{P}(Z_t \ge  k).
\label{CPTCL}
\eq
We emphasize again that, under $\widetilde{\mathbb{P}}$, $Z_{t}$ has the same distribution as a L\'evy process with characteristic {triplet} $(b^{*},\sigma^{2},\nu^{*})$ evaluated at an independent random clock $(T_{t})$. Hence, as for the pure-L\'evy model case, the problem of finding small-time expansions for out-the-money option prices reduces to finding small-time asymptotics of the corresponding distribution functions.
\subsection{Small-time asymptotics for the time-changed L\'evy model}
In this section, we determine the asymptotic behavior of out-the-money call option prices.
We consider random clocks $(T_{t})$ that are absolutely continuous with non-negative rate process $(Y_{t})$ (i.e. $T_t=\int_0^t Y_s ds$) such that $Y_{0}>0$. {We will also refer to the following conditions in the sequel}:
\begin{align}\label{AN3}
	&{\bf (i)}\;\; \mathbb{E} Y_{t} -\mathbb{E} Y_{0} =O(t), \quad
	{\bf (ii)}\;\; \limsup_{t\searrow{}0}\, \mathbb{E} Y^{2}_{t}<\infty,\quad
	{{\bf (iii)}\;\; \lim_{t\searrow{}0}\frac{1}{t}\left[\mathbb{E} Y_{t} -\mathbb{E} Y_{0}\right]=\gamma\in[0,\infty)},\\
	\label{AN4}
	&{\bf (iv)}\;\; \limsup_{t\searrow{}0}\, \mathbb{E} Y_{t}^{3}<\infty, \quad
 	{\bf (v)}\;\; \lim_{t\searrow{}0} \frac{1}{t^{2}}\mathbb{E} T_{t}^{2}=\rho\in(0,\infty).
\end{align}
In the case that $(Y_{t})$ is a stationary process with finite moment of third order, $\mathbb{E}Y_{t}^{k}$ is constant for $k=1,\dots,3$ and (i)-{(iv)} are automatically satisfied. Also, if $Y_{t}\to{}Y_{0}$ and {(iv)} are satisfied, then {(v)} holds true with $\rho=\mathbb{E}\, Y_{0}^{2}$. Indeed, note first that $\lim_{s\to{}0}\mathbb{E}\,Y_{s}^{2}=\mathbb{E}\,Y_{0}^{2}$ since ${(Y_{t}^{2})_{t<t_{0}}}$ are uniformly integrable for small enough {$t_{0}$} by {(iv)} above. Also, since
\(
{	T_{t}^{2}/t^{2}\leq \int_{0}^{t}Y_{s}^{2}ds/t}
\)
(by Jensen's inequality) and $\lim_{t\to{}0}\mathbb{E} \int_{0}^{t}Y_{s}^{2}ds/t=\mathbb{E} \lim_{t\to{}0}\int_{0}^{t}Y_{s}^{2}ds/t$, so the dominated convergence theorem implies that
\[
	\lim_{t\to{}0}\frac{1}{t^{2}} \mathbb{E} T_{t}^{2}=
	\mathbb{E} \lim_{t\to{}0}\left(\frac{1}{t}\int_{0}^{t}Y_{s}ds\right)^{2}
	=\mathbb{E} Y_{0}^{2}.
\]
The following result gives the small-time asymptotic behavior of the tail distributions of time-changed L\'evy models:
\begin{thm}\label{TAB2}
  	Suppose that the conditions of Theorem \ref{ThFLH09} are satisfied as well as conditions \emph{(i)-(ii)} of (\ref{AN3}). Then,
\begin{equation}\label{ABTa}
			\mathbb{P}(Z_{t}\geq{} x)= t\mathbb{E}Y_{0} \nu[x,\infty) \left[1+O(t)\right],\quad \quad \quad \quad (t \to 0).
\end{equation}
If, additionally, conditions {\emph{(iii)-(v)}} of (\ref{AN4}) are satisfied, then
\begin{equation}\label{ABTb}
			\mathbb{P}(Z_{t}\geq{} x)= t \mathbb{E}Y_{0} \nu[x,\infty) +   {\frac{1}{2}\left(\rho \,d_{2}(x)+\gamma\nu[x,\infty)\right)} t^{2}+ o(t^{2}),\quad \quad \quad \quad (t \to 0),
		\end{equation}
	where $d_{2}$ is the same as in Theorem \ref{ThFLH09}.		
\end{thm}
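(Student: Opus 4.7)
The natural strategy is to condition on the random clock. By independence of $X$ and $T$, setting $g(s):=\mathbb{P}(X_s\ge x)$, we have
$$\mathbb{P}(Z_t\ge x)=\mathbb{E}[g(T_t)].$$
Theorem~\ref{ThFLH09} provides the pointwise expansion $g(s)=s\,\nu[x,\infty)+\tfrac{1}{2}s^{2}d_{2}(x)+s^{2}\epsilon(s)$ with $\epsilon(s)\to 0$ as $s\to 0$. Taking expectations formally would give
$$\mathbb{P}(Z_t\ge x)=\mathbb{E}[T_t]\,\nu[x,\infty)+\tfrac{1}{2}\mathbb{E}[T_t^{2}]\,d_{2}(x)+\mathbb{E}[T_t^{2}\epsilon(T_t)],$$
from which both (\ref{ABTa}) and (\ref{ABTb}) follow once the remainder is shown to be of the claimed order and the first two moments of $T_t$ are expanded via the assumptions on $Y$.

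The main obstacle is a uniform-in-$s$ control of $g$, since Theorem~\ref{ThFLH09} is only a pointwise small-$s$ statement whereas $T_t$ can a priori be large. The plan is to split $\mathbb{E}[g(T_t)]$ according to whether $T_t\le\delta$ or $T_t>\delta$ for a small fixed $\delta>0$. Jensen's inequality yields $(T_t/t)^{k}\le t^{-1}\int_{0}^{t}Y_{s}^{k}\,ds$, so condition~(ii) gives $\mathbb{E}[T_t^{2}]=O(t^{2})$ and, when available, (iv) gives $\mathbb{E}[T_t^{3}]=O(t^{3})$. For (\ref{ABTa}), using $g\le 1$ and Markov's inequality, the tail contribution is $\mathbb{P}(T_t>\delta)\le\mathbb{E}[T_t^{2}]/\delta^{2}=O(t^{2})$, while on $\{T_t\le\delta\}$ the expansion of $g$ contributes at most $\mathbb{E}[T_t]\,\nu[x,\infty)+O(\mathbb{E}[T_t^{2}])$. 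For (\ref{ABTb}), the stronger bound $\mathbb{P}(T_t>\delta)\le\mathbb{E}[T_t^{3}]/\delta^{3}=o(t^{2})$ together with $\mathbb{E}[T_t^{k}\mathbf{1}_{\{T_t>\delta\}}]\le\mathbb{E}[T_t^{3}]/\delta^{3-k}=o(t^{2})$ for $k=1,2$ disposes of the tail. On the small-clock event, uniformity is easy: for every $\eta>0$ choose $\delta=\delta_{\eta}$ so that $|\epsilon(s)|\le\eta$ for $s\le\delta_{\eta}$; then $|\mathbb{E}[T_t^{2}\epsilon(T_t)\mathbf{1}_{\{T_t\le\delta_{\eta}\}}]|\le\eta\,\mathbb{E}[T_t^{2}]=O(\eta\,t^{2})$. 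Letting $t\to 0$ first and then $\eta\to 0$ produces an $o(t^{2})$ remainder.

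What remains is routine bookkeeping. From (i) and (iii), $\mathbb{E}[Y_{s}]=\mathbb{E}[Y_{0}]+\gamma s+o(s)$, and since an integrated $o(s)$ is $o(t^{2})$,
$$\mathbb{E}[T_t]=\int_{0}^{t}\mathbb{E}[Y_{s}]\,ds=t\,\mathbb{E}[Y_{0}]+\tfrac{\gamma}{2}t^{2}+o(t^{2}),$$
while condition~(v) gives $\mathbb{E}[T_t^{2}]=\rho\,t^{2}+o(t^{2})$ directly. Substituting these into the identity of the first paragraph and collecting the $t^{2}$ coefficient produces $t\,\mathbb{E}Y_{0}\,\nu[x,\infty)+\tfrac{1}{2}(\rho\,d_{2}(x)+\gamma\,\nu[x,\infty))\,t^{2}+o(t^{2})$, which is (\ref{ABTb}). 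For the weaker (\ref{ABTa}), only (i) is needed to get $\mathbb{E}[T_t]=t\,\mathbb{E}[Y_{0}]+O(t^{2})$ and (ii) to absorb the $\tfrac{1}{2}\mathbb{E}[T_t^{2}]d_{2}(x)+\mathbb{E}[T_t^{2}\epsilon(T_t)]$ contribution into the $O(t^{2})$ relative error.
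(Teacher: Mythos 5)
Your proposal is correct and follows essentially the same route as the paper's proof: condition on the independent clock, use the uniform-for-small-times version of the Figueroa-L\'opez--Houdr\'e expansion on the event $\{T_t\le\delta\}$, dispose of the large-clock event via Markov/Chebyshev bounds with $\mathbb{E}T_t^2$ (resp. $\mathbb{E}T_t^3$, controlled by Jensen and conditions (ii), (iv)), and then expand $\mathbb{E}T_t$ and $\mathbb{E}T_t^2$ using (i), (iii), (v). The only cosmetic difference is that you subtract the expansion of $g$ and use $g\le 1$ on the tail, while the paper bounds the normalized errors $\frac{1}{T_t}\bar F(T_t)-\nu[x,\infty)$ and $G_x(T_t)$ by global constants there; the estimates are the same.
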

\begin{proof}
See Appendix \ref{section:Proofs}.
\end{proof}
\begin{rem}
	A very popular rate process in applications is the Cox-Ingersoll-Ross (CIR) diffusion process, defined by
\begin{equation}\label{CIRM}
dY_t=\kappa(\theta-Y_t)dt+\sigma \sqrt{Y_t} dW_t,
\end{equation}
where $(W_t)$ is a standard Brownian motion, $Y_0$ is an integrable positive random variable independent of $W$, and
$\kappa,\theta,\sigma>0$ are such that $\kappa \theta/\sigma^{2}>1/2$ (which ensures that $Y=0$ is an inaccessible boundary). If $Y_{0}\sim \Gamma(\frac{2\theta \kappa}{\sigma^{2}},\frac{\sigma^{2}}{2\kappa})$, the proces $(Y_{t})$ is stationary and  $\mathbb{E} Y_{t}^{k}$ is finite and constant in $t$ for any $k\geq{}1$. In particular, (i)-{(v)} are satisfied with $\rho=\mathbb{E}Y_{0}^{2}$. In the non-stationary case, it is known that $\mathbb{E}Y_{t}-\mathbb{E}Y_{0}=\left(\theta-\mathbb{E}Y_{0}\right)\left(1-e^{-\kappa t}\right)$ and {(i) \& (iii) are satisfied with $\gamma={\kappa}(\theta-\mathbb{E}Y_{0})$. The other conditions in (\ref{AN3}-\ref{AN4}) will also hold true. Thus we conclude that the time-changed L\'evy model with CIR speed process satisfies:
\[
	\mathbb{P}(Z_{t}\geq{} x)= t \mathbb{E}Y_{0} \nu[x,\infty) +   \left(\mathbb{E} Y_{0}^{2} \,d_{2}(x)+{\kappa}(\theta-\mathbb{E}Y_{0})\nu[x,\infty)\right) \frac{1}{2} t^{2}+ o(t^{2}),\quad \quad \quad \quad (t \to 0).	
\]}
\end{rem}
We are now ready to give the small-time asymptotic behavior of out-the-money call option prices and the corresponding implied volatility:
\begin{cor}
Under the conditions of Proposition \ref{TAB2}, we have the following small-time expansions
\be
\label{eq:CallOptionAsymptoticTCM}
\frac{1}{t}\,\mathbb{E}(S_t-K)_{+}\,=\,S_0 \mathbb{E} Y_{0} a_{0}(k)\,+\,
S_0 {\big[ \rho a_{1}(k)+\gamma a_{0}(k)\big]}t+o(t) \quad \quad \quad \quad (t \to 0)\,,
\ee
\nind
where $k=\log K/S_0>0$ and {$a_{0},a_{1}$ are the first and second order terms of the call price approximation (\ref{eq:CallOptionAsymptotic}) as defined in (\ref{TermFirstSecondTerms})}.
Furthermore, we have the following small-time behaviour for the implied variance $V(t,k)$ for $k>0$
\be
\label{eq:ImpliedVarianceTCLevy}
V(t,k)= V_0(t,k) \big[1+
V_1(t,k) \,+\, o(\frac{1}{\log\frac{1}{t}}) \big]
 \quad \quad \quad \quad (t \to 0),
\ee
\nind
where
\bq
V_0(t,k)&=& \frac{\frac{1}{2}k^2}{\log (\frac{1}{t})}\,, \nn \\
V_1(t,k)&=&  \frac{1}{\log(\frac{1}{t})}\log\left(\frac{4 \sqrt{\pi}\,\mathbb{E}(Y_0)a_{0}(k)e^{-k/2}}{k}\left[\log\left(\frac{1}{t}\right)\right]^{3/2}\right)
\,.\eq
\end{cor}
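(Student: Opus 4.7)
My plan is to combine the share-measure decomposition \eqref{CPTCL} with Theorem \ref{TAB2} applied under both $\mathbb{P}$ and $\widetilde{\mathbb{P}}$. The first step is to invoke Proposition \ref{prop:OptionPriceReprTimeChangedLevy}: under $\widetilde{\mathbb{P}}$ the process $(Z_t)$ has the same distribution as a L\'evy process with triplet $(b^{*},\sigma^{2},\nu^{*})$ evaluated at the independent random clock $(T_t)$. Crucially, the change of measure \eqref{DSM} only tilts $X$ and leaves $T$ untouched, so the distribution of $(T_t)$ is the same under $\widetilde{\mathbb{P}}$ as under $\mathbb{P}$, and conditions (i)--(v) of \eqref{AN3}--\eqref{AN4} transfer verbatim (in particular $\widetilde{\mathbb{E}} Y_0 = \mathbb{E} Y_0$). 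The integrability assumption \eqref{eq:ExpCondition} likewise guarantees that $\nu^{*}(x)=e^{x}\nu(x)$ inherits the positive $C^{1}$ smoothness and the $\sup_{|x|>\varepsilon}$-boundedness required by Theorem \ref{ThFLH09}. Hence Theorem \ref{TAB2} applies to $(Z_t)$ under both measures.

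Next I would apply Theorem \ref{TAB2} separately to $\mathbb{P}(Z_t\geq k)$ and $\widetilde{\mathbb{P}}(Z_t\geq k)$, yielding two-term expansions up to $o(t^{2})$ with leading/correction coefficients $(\nu[k,\infty),d_2(k))$ and $(\nu^{*}[k,\infty),d_2^{*}(k))$ respectively. Forming $S_0\widetilde{\mathbb{P}}(Z_t\geq k)-S_0 e^{k}\mathbb{P}(Z_t\geq k)$ and using the two identities
\[
\nu^{*}[k,\infty)-e^{k}\nu[k,\infty)=\int_k^\infty(e^{x}-e^{k})\,\nu(dx)=a_0(k), \qquad d_2^{*}(k)-e^{k}d_2(k)=2\,a_1(k)
\]
(the second being the very definition of $a_1$ in \eqref{TermFirstSecondTerms}), the call-price expansion \eqref{eq:CallOptionAsymptoticTCM} follows after dividing by $t$. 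Note that the $\gamma\,\nu[k,\infty)$ contributions from the two measures combine to produce the coefficient $\gamma\,a_0(k)$ of $t^{2}$, while the $\rho\,d_2$ pieces combine to $\rho\,a_1(k)$.

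For the implied-variance expansion \eqref{eq:ImpliedVarianceTCLevy}, the argument is a verbatim repetition of the inversion step used in the proof of Theorem \ref{thm:LevyImpliedVariance} (carried out in the appendix); the only structural difference is that the leading-order call-price coefficient is $\mathbb{E}(Y_0)\,a_0(k)$ rather than $a_0(k)$, and the $t^{2}$-correction is $\rho\,a_1(k)+\gamma\,a_0(k)$ rather than $a_1(k)$. Tracking $a_0(k)$ through the pure-L\'evy derivation, one sees that the factor $\mathbb{E}(Y_0)$ migrates inside the logarithm appearing in $V_1(t,k)$, precisely as stated, while the higher-order correction contributes only $\log(1+O(t))/\log(1/t)=o(1/\log(1/t))$ upon inversion and is therefore absorbed into the error term.

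The main obstacle I foresee is the careful bookkeeping in the first step: confirming that all the smoothness, integrability, and clock-moment hypotheses needed by Theorem \ref{TAB2} transfer cleanly to $(Z_t)$ under $\widetilde{\mathbb{P}}$, so that the theorem may be invoked under both measures. Once this is in place, everything else reduces to elementary substitution together with re-use of the pure-L\'evy implied-volatility inversion, and no new analytic work is required.
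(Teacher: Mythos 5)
Your proposal is correct and follows essentially the same route as the paper: the paper's (very terse) proof likewise combines the representation \eqref{CPTCL} with Theorem \ref{TAB2} applied to $Z$ under both $\mathbb{P}$ and $\widetilde{\mathbb{P}}$ (via Proposition \ref{prop:OptionPriceReprTimeChangedLevy}), uses $\nu^{*}[k,\infty)-e^{k}\nu[k,\infty)=a_{0}(k)$ and $d_{2}^{*}(k)-e^{k}d_{2}(k)=2a_{1}(k)$, and then reruns the inversion from the proof of Theorem \ref{thm:LevyImpliedVariance} with $\mathbb{E}(Y_{0})a_{0}(k)$ in place of $a_{0}(k)$. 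Your remark that the tilting ``leaves $T$ untouched'' should really be justified through Proposition \ref{prop:OptionPriceReprTimeChangedLevy} (the density $e^{Z_{t}}$ does involve $T$, but since $\mathbb{E}e^{X_{s}}=1$ the clock's law is preserved), which you in fact cite, so no gap remains.
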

\begin{proof}
	The expansion (\ref{eq:CallOptionAsymptoticTCM}) follows from the representation (\ref{CPTCL}) and (\ref{ABTb}). The asymptotics (\ref{eq:ImpliedVarianceTCLevy}) follows from the proof of Thorem \ref{thm:LevyImpliedVariance}.
\end{proof}

{
\begin{rem}
	As it was indicated before, the time-changed L\'evy model (\ref{TCM}) was introduced to account for the volatility clustering exhibited by financial time series. Indeed, the process $(Y_{t})_{t}$ controls the speed of the random clock so that when $Y_{t}$ is high, the random clock runs faster and, hence, the price process exhibits more variability. Another approach to incorporate stochastic volatility is via stochastic integration along the lines of the following jump-diffusion model
	\begin{equation}\label{SVMGH}
		d \ln(S_{t}/S_{0})=\mu(Y_{t}) dt +\sigma(Y_{t}) d W_{t}^{1}+d Z_{t}, \quad d Y_{t}=\alpha(Y_{t})dt +\gamma(Y_{t}) d W_{t}^{(2)},
	\end{equation}
	where $W^{(1)}$ and $W^{(2)}$ are {two (possibly correlated)} Brownian motions and $Z$ is a pure-jump process. For {a} comparison of these two method{s}, we refer the reader to Chapter 15 of \cite{CT04}. Recently, \cite{FLGH11} have provided small-time expansions for vanilla option prices under the stochastic model (\ref{SVMGH}) when $Z$ is a {pure-jump} L\'evy process independent of $Y$.
\end{rem}
}

\section{Small-time, small log-moneyness asymptotics}\label{section:SmallTimeSmallMoneyness}

In this section, we survey the behavior of $\mathbb{P}(X_t \ge k)$ for a L\'{e}vy process $X$, when $t\to{}0$ and $k=k_{t}$ also converges to zero at an appropriate rate.  We can think of this scaling as a \textit{small-time, small log-moneyness} regime. As an application, we deduce the asymptotic behavior of at-the-money call option prices for a {CGMY model}.

\subsection{L\'evy models with non-zero Brownian component}

Several financial models in the literature consist of a L\'evy model with non-zero Brownian component. The most popular models of this kind are the Merton model and Kou model determined by the characteristic functions
\begin{align*}
	\mathbb{E}(\exp(iu X_t))&=\exp[ t(ib u-\frac{1}{2}\sigma^2 u^{2} +  iu \lambda\,(\frac{p}{\lambda_{+}-iu}-\frac{1-p}{\lambda_{-}+iu}))],\\
	\mathbb{E}(\exp(iu X_t))&=\exp[ t(ib u-\frac{1}{2}\sigma^2 u^{2} +  \lambda\,(e^{-\delta^{2}u^{2}/2+i\mu u}-1))].
\end{align*}
It turns out that, for a general L\'evy process $(X_{t})$ with $\sigma\neq{}0$,
$$
\lim_{t \to 0}\mathbb{E}(\exp(iu X_t/\sqrt{t}))= \exp(-\frac{1}{2}\sigma^2 u^{2})\,,
$$
(see e.g. pp. 40 in \cite{Sat99} for a formal proof).  The right-hand side is the characteristic function of a Normal $N(0,\sigma^2)$ random variable $Z$, thus $(X_t/\sqrt{t})$ converges weakly to a Normal distribution with variance $\sigma^2$ and
$$
\lim_{t \to 0} \mathbb{P}(X_t/\sqrt{t} > x) =\mathbb{P}(Z > x)\,.
$$

\subsection{The CGMY model and other tempered stable models}

The {so-called} CGMY model is a pure-jump L\'evy process determined by a L\'{e}vy density of the form
 \bq
    \label{eq:HestonSmallT}
	{\nu(x)=\frac{C e^{-Mx}}{x^{1+Y}}\, {\bf 1}_{\{x> 0\}}+\frac{C e^{Gx}}{|x|^{1+Y}}\,{\bf 1}_{\{x<0\}}}.
\eq
for $C,G,M>0$ and $Y\in(0,2)$.   {As explained in the introduction, the CGMY model is a particular case of the more general KoBoL class {of models}, named after the authors \cite{Kop95} (who first introduced the symmetric version of the model under the name of ``truncated L\'evy flights") and \cite{BL02}. The term CGMY was introduced later on by Carr et al. \cite{CGMY02}. This process} is a tempered stable process (see Section 4.5 in Cont\&Tankov\cite{CT04}), and its characteristic function is given as
\begin{equation}\label{CFCGMY}
	{\phi_t(u)=\mathbb{E}(e^{iu X_t})=\exp\left[t \,C \Gamma(-Y)\left\{(M-iu)^Y+(G+iu)^Y-M^Y-G^Y\right\}+i {\hat{b}}ut\right]},
\end{equation}
\nind for $Y\neq{}1$ {and some constant $\hat{b}\in\mathbb{R}$} (see \cite{CT04} for the formula when $Y=1$). We note that we must have $M>1$ for (\ref{eq:ExpCondition}) to be satisfied, and under this condition, $X$ is again a CGMY process under $\mathbb{P}^{*}$ with parameters $C^{*}=C$,  $Y^{*}=Y$,
$M^{*}=M-1$, and $G^{*}=G+1$. {In the bounded variation case ($Y<1$), $\hat{b}$ coincides with the drift $b_{0}$.}

The following result characterizes the small-time behavior of $\mathbb{P}(X_t > k_{t})$ with small log-moneyness  $k_{t}\sim x t^{1/Y}$.
\begin{prop}\label{prop:CGMYConv}
For the CGMY model with $Y\in(1,2)$, $(X_t/t^{1/Y})$ converges weakly to a symmetric {$Y$-stable} distribution as $t\to 0$. Concretely,
$$
\lim_{t \to 0} \mathbb{P}(X_t/t^{1/Y} > x) =\mathbb{P}(Z > x)\,,
$$
\nind where $Z$ is a symmetric {$Y$-stable random variable with scale parameter $c=(2C\Gamma(-Y)|\cos(\frac{1}{2}Y\pi)|)^{1/Y}$; i.e. $Z$ has} characteristic function
$$
	{{\zeta}(u)=\exp(-2C\Gamma(-Y)|\cos(\frac{1}{2}Y\pi)|\,|u|^{Y})}\,.
$$
\end{prop}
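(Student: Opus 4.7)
The natural plan is to apply L\'evy's continuity theorem, showing pointwise convergence of the characteristic function of $X_t/t^{1/Y}$ to $\zeta(u)$. From \eqref{CFCGMY}, for each $u\in\mathbb{R}$,
\[
\log\mathbb{E}\bigl(e^{iuX_t/t^{1/Y}}\bigr)
=tC\Gamma(-Y)\bigl\{(M-iu/t^{1/Y})^{Y}+(G+iu/t^{1/Y})^{Y}-M^{Y}-G^{Y}\bigr\}+i\hat b\,u\,t^{1-1/Y}.
\]
The first step is the key algebraic observation: one can factor $t^{-1/Y}$ out of each complex power, so that
\[
t\cdot(M-iu/t^{1/Y})^{Y}=(Mt^{1/Y}-iu)^{Y},\qquad
t\cdot(G+iu/t^{1/Y})^{Y}=(Gt^{1/Y}+iu)^{Y}.
\]
The $t$ up front is thus absorbed, turning what looked like a large argument into a small perturbation of $\pm iu$.

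Second, I would pass to the limit. Since $Mt^{1/Y}\to 0$ and $Gt^{1/Y}\to 0$, continuity of the principal branch of $z^{Y}$ away from the negative real axis gives $(Mt^{1/Y}-iu)^{Y}\to(-iu)^{Y}$ and $(Gt^{1/Y}+iu)^{Y}\to(iu)^{Y}$. The vestigial constants satisfy $tC\Gamma(-Y)(M^{Y}+G^{Y})\to 0$. Crucially, because $Y\in(1,2)$ we have $1-1/Y>0$, so the drift contribution $i\hat b\,u\,t^{1-1/Y}$ also vanishes; this is precisely the place where the hypothesis $Y>1$ is used (for $Y<1$ one would instead need to recenter by the drift).

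Third, a short trigonometric simplification finishes the computation: writing $(\pm iu)^{Y}=|u|^{Y}e^{\pm i\tfrac{Y\pi}{2}\operatorname{sgn}(u)}$, the imaginary parts cancel and one obtains $(-iu)^{Y}+(iu)^{Y}=2|u|^{Y}\cos(Y\pi/2)$. Since $Y\pi/2\in(\pi/2,\pi)$, we have $\cos(Y\pi/2)<0$, i.e.\ $\cos(Y\pi/2)=-|\cos(Y\pi/2)|$; combined with $\Gamma(-Y)>0$ for $Y\in(1,2)$, the limiting log-characteristic function becomes $-2C\Gamma(-Y)|\cos(Y\pi/2)||u|^{Y}$, which is exactly $\log\zeta(u)$.

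The only real obstacle is bookkeeping: making sure the principal branch of $z\mapsto z^{Y}$ is chosen consistently with the one implicit in \eqref{CFCGMY}, and handling the case $Y=1$ separately (which is excluded here since $Y\in(1,2)$ is assumed). Once pointwise convergence of $\phi_t(u/t^{1/Y})$ to $\zeta(u)$ is established and $\zeta$ is continuous at $u=0$, L\'evy's continuity theorem gives weak convergence, and in particular $\mathbb{P}(X_t/t^{1/Y}>x)\to\mathbb{P}(Z>x)$ at every continuity point $x$ of the distribution of $Z$; since $Z$ has a continuous density, this holds for all $x$.
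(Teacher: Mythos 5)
Your proposal is correct and follows essentially the same route as the paper: compute the limit of $\exp\bigl(t\psi(u/t^{1/Y})\bigr)$, identify the limit as the symmetric $Y$-stable characteristic function $\zeta(u)$, invoke L\'evy's continuity theorem, and pass from weak convergence to convergence of the tail probabilities via continuity of the stable law. Your write-up simply fills in details the paper leaves implicit (the factoring of $t^{-1/Y}$ out of the complex powers, the vanishing of the drift term using $1-1/Y>0$, and the sign bookkeeping with $\Gamma(-Y)>0$ and $\cos(Y\pi/2)<0$), all of which are accurate.
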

\begin{rem}
 Note that $Z$ has infinite variance because $Y<2$.  The stable distribution was famously used by Mandelbrot\cite{Man63} to model power-like tails and self-similar behaviour in cotton price returns.
\end{rem}
\begin{proof}
Let
\be \label{eq:CGMY_CE}
	\psi(u) =C \Gamma(-Y)((M-iu)^Y+(G+iu)^Y-M^Y-G^Y)+iu{\hat{b}}
\ee
denote the characteristic exponent for the CGMY process.  Then we have
\[
	{\zeta}(u)={\lim_{t \to 0}\exp(t\psi(\frac{u}{t^{1/Y}})) = \exp(-C\Gamma(-Y)|(-i)^Y+i^Y|\,|u|^{Y})}\,,
\]
\nind
where we used that $Y\in(1,2)$. ${\zeta(u)}$ is continuous at zero and we recognize ${\zeta(u)}$ as the characteristic function of a symmetric alpha-stable distribution.  Thus, by L\'{e}vy's convergence theorem (see Theorem 18.1 in Williams\cite{Will91}), the sequence of random variables $(X_t/t^{1/Y})$ converges weakly to $Z$.  The second result follows from the Lemma on page 181, chapter 17 in \cite{Will91}.
\nind \end{proof}

\begin{rem}
Proposition \ref{prop:CGMYConv} is a particular case of a result shown in Rosi\'nski \cite{Ros07} where a more general class of tempered L\'evy measures is considered. Concretely,  \cite{Ros07} considers L\'evy measures of the form
\begin{equation}\label{TSLM}
	\nu(A)=\int_{\mathbb{R}}\int_{0}^{\infty} 1_{A}(u w) u^{-Y-1}e^{-u}du R(dw),
\end{equation}
for a measure $R$ such that $R(\{0\})=0$ and $\int_{\mathbb{R}} (|w|^{2}\wedge |w|^{Y})R(dw)<\infty$. The CGMY model is recovered by taking $R(dw)=C M^{Y}\delta_{\{M^{-1}\}}(dw)+C G^{Y}\delta_{\{-G^{-1}\}}(dw)$. In light of Rosi\'nski's  Theorem 3.1, it follows that Proposition \ref{prop:CGMYConv} also holds true for $Y\in(0,1)$ (finite-variation case) provided that $(X_{t})$ is driftless, i.e. ${\hat{b}}$ in (\ref{CFCGMY}) must be $0$ (otherwise, we have to replace $X_{t}$ by $X_{t}-{\hat{b}}t$).
Note that under $\mathbb{P}^{*}$, $X$ is also driftless {(see Remark \ref{RmrDrftVolEffct})}.
\end{rem}

Another well-know class of L\'evy processes is the Normal Inverse Gaussian (NIG) model, introduced in Barndorff-Nielsen\cite{Bar97}, for which the characteristic function is given by
$$
\mathbb{E}(\exp(iu X_t))=\exp[-t\delta (\sqrt{\alpha^2-(\beta+iu)^2}-\sqrt{\alpha^2-\beta^2})]\,.
$$
The L\'evy density of the NIG model takes the form $\nu(x)=Ce^{Ax}K_{1}(B|x|)/|x|$ where $K_{1}$ is the modified Bessel function of second kind and $A$,$B$, and $C$ are certain positive constants (see \cite{CT04} for their expressions). Hence, one can view the NIG process as an improper tempered stable process in the sense of Rosi\'nski \cite{Ros07}.  It is also easy to see that
$$
\lim_{t \to 0}\mathbb{E}(\exp(iu X_t/t))=\exp[-t\delta |u|]\,.
$$
\nind The right-hand side is the characteristic function of a symmetric alpha-stable random variable $Z$ with $\alpha=1$ and scale parameter $\delta$ i.e. a Cauchy distribution; thus by the same argument we see that $(X_t/t^{1/Y})$ converges weakly to a symmetric Cauchy distribution:
$$
\lim_{t \to 0} \mathbb{P}(X_t/t^{1/Y} > x) =\mathbb{P}(Z > x)\,.
$$

\subsection{At-the-money call option prices for the CGMY model}
\nind Our approach to deal with at-the-money call option prices is based on the following result from Carr\&Madan\cite{CM09}:
\be\label{CMR}
\frac{1}{S_0}\mathbb{E}(S_t-K)_{+}=\mathbb{P}^*(X_t-E> \log \frac{K}{S_0}) \,,\ee

\nind where $E$ is an independent exponential random variable under $\mathbb{P}^*$ with parameter 1.  Now set $K=S_0$.  Consider the CGMY model with $Y\in(1,2)$.  The idea is to use the {small-time, small log-moneyness result in the previous section}.  Indeed, note that
\begin{align}
	t^{-1/Y}\mathbb{P}^{*}(X_{t}\geq{}E)=t^{-1/Y}\int_{0}^{\infty} e^{-x} \mathbb{P}^{*}(X_{t}\geq{}x)dx=\int_{0}^{\infty} e^{-t^{1/Y}u} \mathbb{P}^{*}(X_{t}\geq{}t^{1/Y}u)du.\label{LOP1}
\end{align}	
From our {Proposition \ref{prop:CGMYConv}},
\begin{equation*}
	\mathbb{P}^{*}(X_{t}\geq{}t^{1/Y}u)\to \mathbb{P}^{*}(Z\geq{} u),
\end{equation*}
for any $u>0$, where $Z$ is a symmetric $\alpha$-stable r.v. under $\mathbb{P}^*$.  The previous fact suggests the following result:
\begin{prop}\label{PATM}
Suppose that $X$ is a CGMY {process under $\mathbb{P}$ with $Y\in(1,2)$}. Then, the at-the-money call option price has the following asymptotic behavior:
\begin{align}\label{LOP}
	\lim_{t\to{}0}t^{-1/Y}\mathbb{E}(S_t-S_0)_{+}&=S_{0}\mathbb{E}^{*}(Z_{+}),
\end{align}
where $Z$ is a symmetric {$Y$-stable} r.v. as in Proposition 4.1.
\end{prop}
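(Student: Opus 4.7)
The plan is to start from the Carr--Madan representation (\ref{CMR}) specialized to $K=S_{0}$, use the rewriting already carried out in (\ref{LOP1}), and push the small-time scaling limit of Proposition \ref{prop:CGMYConv} inside the resulting integral by dominated convergence. Setting $K=S_{0}$ in (\ref{CMR}) gives $\mathbb{E}(S_{t}-S_{0})_{+}/S_{0}=\mathbb{P}^{*}(X_{t}\ge E)$, and the change of variables $x=t^{1/Y}u$ from (\ref{LOP1}) yields
\begin{equation*}
t^{-1/Y}\,\frac{1}{S_{0}}\mathbb{E}(S_{t}-S_{0})_{+}=\int_{0}^{\infty}e^{-t^{1/Y}u}\,\mathbb{P}^{*}\!\left(X_{t}/t^{1/Y}\ge u\right)du,
\end{equation*}
so the whole proof reduces to justifying passage to the limit in this integral.

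Under $\mathbb{P}^{*}$, the process $X$ is again CGMY, with parameters $(C,G+1,M-1,Y)$, and the limiting symmetric $Y$-stable law in Proposition \ref{prop:CGMYConv} depends only on $C$ and $Y$; hence that proposition applies verbatim under $\mathbb{P}^{*}$, and since $Z$ has a continuous distribution the integrand converges pointwise to $\mathbb{P}^{*}(Z\ge u)$ for every $u>0$. The standard tail-integration identity $\int_{0}^{\infty}\mathbb{P}^{*}(Z\ge u)\,du=\mathbb{E}^{*}(Z_{+})$ then delivers (\ref{LOP}) once the exchange of limit and integral is justified.

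The main obstacle is therefore the production of an integrable majorant, uniformly in small $t$. I would fix $p\in(1,Y)$ and apply Markov's inequality, $\mathbb{P}^{*}(X_{t}/t^{1/Y}\ge u)\le u^{-p}\,t^{-p/Y}\,\mathbb{E}^{*}|X_{t}|^{p}$, which combined with the trivial bound $1$ on $[0,1]$ produces the desired majorant $u\mapsto C\min(1,u^{-p})$ as soon as $\sup_{t\in(0,t_{0}]}t^{-p/Y}\,\mathbb{E}^{*}|X_{t}|^{p}<\infty$. To verify this uniform moment bound, I would split $X_{t}$ under $\mathbb{P}^{*}$ at the threshold $t^{1/Y}$ into drift, a small-jump martingale $X^{s}$, and a large-jump compound Poisson $X^{\ell}$. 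Since $\nu^{*}(x)\le C|x|^{-1-Y}$ on $\{|x|\le 1\}$, we have $\mathbb{E}^{*}[X^{s}]_{t}=O(t^{2/Y})$ and hence $\mathbb{E}^{*}|X^{s}_{t}|^{p}=O(t^{p/Y})$ by the Burkholder--Davis--Gundy inequality together with Jensen (using $p<2$); compound-Poisson moment estimates using $\int_{|x|>t^{1/Y}}|x|^{p}\nu^{*}(dx)=O(t^{(p-Y)/Y})$ (with $p<Y$ and exponential tempering at infinity) give $\mathbb{E}^{*}|X^{\ell}_{t}|^{p}=O(t^{p/Y})$; and the drift part is $O(t^{p})=o(t^{p/Y})$ because $Y>1$. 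Combining these three contributions and applying dominated convergence closes the argument.
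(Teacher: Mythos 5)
Your argument is correct, but it reaches the key domination step by a genuinely different route than the paper. The paper also starts from the Carr--Madan identity and the rewriting in (\ref{LOP1}), but its integrable majorant comes from Lemma \ref{KPN}, a pointwise tail bound $\mathbb{P}^{*}(X_{t}\geq x)\leq Kx^{-Y}t$ proved via a concentration inequality (Houdr\'e) for the truncated process; that lemma needs the L\'evy density to be symmetric, so the paper first treats the case $G=M$ and then removes the symmetry assumption in a second step through an extra Esscher change of measure with $\beta=(M-G)/2$, showing the resulting correction is $O(t)=o(t^{1/Y})$ via the moment asymptotics of \cite{FL08}. You instead dominate by $\min(1,Cu^{-p})$ for a fixed $p\in(1,Y)$ using Markov's inequality and the uniform moment estimate $\mathbb{E}^{*}|X_{t}|^{p}=O(t^{p/Y})$, obtained from the L\'evy--It\^o splitting at the natural threshold $t^{1/Y}$ plus BDG/Jensen for the small-jump martingale and elementary compound-Poisson bounds for the large jumps. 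This handles the asymmetric CGMY process directly, so you avoid both the concentration inequality and the symmetrization step; the price is a slightly weaker majorant ($u^{-p}$ rather than the sharp $u^{-Y}$ of Lemma \ref{KPN}, which the paper also reuses in Remark \ref{GnrClsATM}), but that is entirely sufficient for dominated convergence since $p>1$. Two small points you should spell out: the large-jump estimate $\mathbb{E}^{*}|X^{\ell}_{t}|^{p}=O(t^{p/Y})$ uses that $\nu^{*}(|x|>t^{1/Y})\,t$ stays bounded (or, alternatively, compensate the large jumps and apply BDG again, the compensator being $O(t)$), and the drift of the $t^{1/Y}$-truncated part must be checked to remain bounded (it does, because the asymmetry of $\nu^{*}$ near zero is $O(|x|^{-Y})$, and even a worst-case $\varepsilon^{1-Y}$ blow-up would still only contribute $O(t^{p/Y})$). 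The identification of the limit via Proposition \ref{prop:CGMYConv} under $\mathbb{P}^{*}$ and the tail-integration identity is the same as in the paper.
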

\begin{proof}
See Appendix \ref{section:Proofs}.
\end{proof}

In order to justify the previous argument, we will need the following estimate:
\begin{lem}\label{KPN}
	Let $X$ {denote a symmetric CGMY process {under $\mathbb{P}$} (hence $G=M$) with $Y\in(1,2)$, {$M>1$, and $C>0$}}. Then, there exists a universal constant $K>0$ such that
	\begin{equation}\label{ESPT}
		\mathbb{P}^{*}\left(X_{t}\geq{}x\right)\leq{} K x^{-Y} t.
	\end{equation}
	for any $t>0$ and $x>0$ satisfying
	\(
		t(b+\int_{|z|\leq{}x/4} z  (e^{z}-1)\nu(dz))< x/4.
	\)
\end{lem}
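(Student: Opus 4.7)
The plan is to truncate the jumps of $X$ under $\mathbb{P}^*$ at level $h:=x/4$ and then bound the two resulting components by elementary means: Markov's inequality for the large-jump compound Poisson term, and Chebyshev's inequality for the small-jump compensated martingale term. Writing
\[
X_t \;=\; \beta_h\, t \;+\; \bar X_t^{(h)} \;+\; N_t^{(h)},
\]
with $N_t^{(h)}=\sum_{s\le t}\Delta X_s\,\mathbf 1_{\{|\Delta X_s|>h\}}$ the large-jump piece, $\bar X_t^{(h)}$ the $\mathbb{P}^*$-compensated integral over jumps of size $\le h$ (a square-integrable $\mathbb{P}^*$-martingale), and $\beta_h$ the remaining drift, the first task is to verify that in the symmetric CGMY case
\[
\beta_h \;=\; b \;+\; \int_{|z|\le h} z\bigl(e^z-1\bigr)\nu(dz),
\]
which is precisely the quantity appearing in the hypothesis. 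This identity follows from $b^*=b+\int_{|z|\le 1}z(e^z-1)\nu(dz)$ (with $\sigma=0$ here) combined with the shift-of-truncation identity $\beta_h=b^*-\int_{h<|z|\le 1}z\,\nu^*(dz)$, after using the symmetry cancellation $\int_{h<|z|\le 1}z\,\nu(dz)=0$. Symmetry is essential because for $Y\in(1,2)$ the integrand $z$ is not absolutely integrable against $\nu$ near $0$; without it the two truncation schemes would not match up so neatly.

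Given the hypothesis $\beta_h t<x/4$, the inclusion $\{X_t\ge x\}\subseteq\{\bar X_t^{(h)}\ge x/4\}\cup\{N_t^{(h)}\ge x/2\}$ is immediate. Since $N_t^{(h)}\ge x/2>0$ requires at least one positive jump of size exceeding $h$, the second probability is bounded via $1-e^{-t\lambda}\le t\lambda$ by
\[
t\int_{z>h}e^z\nu(dz)\;\le\;Ct\int_{h}^{\infty}z^{-1-Y}e^{-(M-1)z}dz\;\le\;\frac{C}{Y}\,t\,h^{-Y},
\]
where $M>1$ ensures $e^{-(M-1)z}\le 1$ on $(h,\infty)$. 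For the first probability, I would compute $\mathbb{E}^*[(\bar X_t^{(h)})^2]=t\int_{|z|\le h}z^2 e^z\nu(dz)\le \tfrac{2C}{2-Y}\,t\,h^{2-Y}$ directly from the symmetric CGMY L\'evy density (again using $M>1$), and Chebyshev's inequality then yields $\mathbb{P}^*(\bar X_t^{(h)}\ge x/4)\le 16 x^{-2}\,\mathbb{E}^*[(\bar X_t^{(h)})^2]$. Substituting $h=x/4$ and summing the two tail bounds produces the claim with universal constant $K=C\,4^Y\bigl(1/Y+2/(2-Y)\bigr)$.

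The only genuinely delicate step is the identification of $\beta_h$ above: for $Y\in(1,2)$ the process has infinite variation and one cannot naively split off a ``drift plus jumps'' decomposition at an arbitrary truncation level, because $\int_{|z|\le 1}|z|\nu(dz)=\infty$. Symmetry of $\nu$ is precisely what makes the decomposition close up cleanly against the hypothesis, and this is presumably why the lemma is stated for the symmetric CGMY only. Once the drift representation is in hand, the rest is a standard truncation-plus-Chebyshev calculation.
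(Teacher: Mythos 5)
Your proof is correct, and it follows the same basic strategy as the paper: truncate the jumps at a level proportional to $x$, bound the large-jump part by the probability of at least one big jump ($1-e^{-\lambda t}\le \lambda t$), use the symmetry of $\nu$ to identify the residual drift/mean with the quantity $b+\int_{|z|\le \cdot}z(e^z-1)\nu(dz)$ appearing in the hypothesis (the paper does exactly this computation for $\mathbb{E}^*X_t^{\varepsilon}$), and control the small-jump part by a second-moment bound of order $t\,x^{2-Y}$. The one substantive difference is the tool used on the small-jump component: the paper truncates at $\varepsilon=x/2$ and invokes a concentration inequality for infinitely divisible random variables (Houdr\'e, Corollary 1), whereas you truncate at $h=x/4$, split off the drift explicitly, and apply plain Chebyshev to the compensated martingale. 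Since with the paper's choice $\varepsilon=x/2$ the exponent $x/(2\varepsilon)$ in the concentration bound equals $1$ and the estimate degenerates to a variance-over-$x^2$ bound anyway, your Chebyshev argument delivers the same linear-in-$t$ rate by more elementary means, at no loss; the paper's route would pay off only if one wanted higher powers of $t$ by taking $\varepsilon$ smaller relative to $x$. Your explicit constant and the inclusion $\{X_t\ge x\}\subseteq\{\bar X_t^{(h)}\ge x/4\}\cup\{N_t^{(h)}\ge x/2\}$ under $\beta_h t<x/4$ are both sound; the only cosmetic quibble is that the compensated decomposition at level $h$ exists for any L\'evy process regardless of variation, and symmetry is needed only to rewrite $\beta_h=b^*-\int_{h<|z|\le 1}z\,\nu^*(dz)$ in the form stated in the hypothesis, not to justify the decomposition itself.
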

\begin{proof}
See Appendix \ref{section:Proofs}.
\end{proof}

\begin{rem}\label{GnrClsATM}
	As seen in the proof of Lemma \ref{KPN}, the estimate (\ref{ESPT}) is valid for any pure-jump L\'evy process admitting a symmetric L\'evy density $\nu(x)$ such that
	\[
		\nu(x)\leq C \frac{e^{- M |x|}}{|x|^{1+Y}},
	\]
	for some $Y\in(1,2)$, $C>0$, and $M>1$ . Moreover, as seen in the proofs of Proposition \ref{PATM}\, if we further assume that
	\begin{equation}\label{CTStbl}
		\left(t^{-1/Y} X_{t}\right)_{t} {\;{\stackrel{\frak {D}}{\longrightarrow}}\;} \left(Z_{t}\right)_{t},
	\end{equation}
	as $t\to{}0$ under $\mathbb{P}^{*}$ (for a symmetric $Y$-stable process $(Z_{t})_{t}$), then the asymptotic behavior (\ref{LOP}) will also hold. Condition (\ref{CTStbl}) holds for a wide range of processes (see, for instance, Proposition 1 in \cite{RT11} for relatively mild conditions).
\end{rem}

\subsection{At-the-money implied volatility}

\begin{prop}
For the {CGMY model} with $Y \in (1,2)$ in Proposition \ref{PATM}, we have the following small-time behaviour for the at-the-money implied volatility $\hat{\sigma}_t(0)$
$$\lim_{t \to 0}\hat{\sigma}_t(0)/t^{1/Y-\frac{1}{2}}=\sqrt{2\pi}\,\mathbb{E}^{*}(Z_{+})\,.  $$
\end{prop}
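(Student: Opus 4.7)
The plan is to reduce the problem to the standard at-the-money (ATM) Black--Scholes expansion and plug in Proposition \ref{PATM}. Since we assume zero interest rates and $k=0$, the Black--Scholes price at log-moneyness zero is
\[
  C_{BS}(t,\hat\sigma_t(0)) = S_0\bigl[2\Phi(\tfrac12 \hat\sigma_t(0)\sqrt t) - 1\bigr],
\]
where $\Phi$ is the standard normal CDF. By definition of implied volatility, this equals $\mathbb{E}(S_t - S_0)_+$.

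First I would establish that $\hat\sigma_t(0)\sqrt{t}\to 0$ as $t\to 0$. Indeed, Proposition \ref{PATM} gives $S_0^{-1}\mathbb{E}(S_t-S_0)_+ \sim \mathbb{E}^*(Z_+)\,t^{1/Y}$, and $2\Phi(x/2)-1$ is an increasing function of $x\ge 0$, so $\hat\sigma_t(0)\sqrt{t}$ is determined implicitly by
\[
  2\Phi(\tfrac12 \hat\sigma_t(0)\sqrt t)-1 = S_0^{-1}\mathbb{E}(S_t-S_0)_+\longrightarrow 0,
\]
hence $\hat\sigma_t(0)\sqrt t \to 0$. (Note that $1/Y\in(1/2,1)$ because $Y\in(1,2)$, so $t^{1/Y}/\sqrt t = t^{1/Y-1/2}\to 0$, which is perfectly consistent with the expected scaling $\hat\sigma_t(0) = O(t^{1/Y-1/2})$.)

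Next I would use the Taylor expansion $\Phi(x) = \tfrac12 + x/\sqrt{2\pi} + O(x^3)$ as $x\to 0$, which yields
\[
  2\Phi(\tfrac12\hat\sigma_t(0)\sqrt t) - 1
  = \frac{\hat\sigma_t(0)\sqrt t}{\sqrt{2\pi}}\bigl(1+o(1)\bigr).
\]
Combining with Proposition \ref{PATM},
\[
  \frac{\hat\sigma_t(0)\sqrt t}{\sqrt{2\pi}}(1+o(1)) = S_0^{-1}\mathbb{E}(S_t-S_0)_+ = \mathbb{E}^*(Z_+)\,t^{1/Y}(1+o(1)),
\]
and dividing both sides by $t^{1/Y}$ and solving gives
\[
  \lim_{t\to 0}\frac{\hat\sigma_t(0)}{t^{1/Y-1/2}} = \sqrt{2\pi}\,\mathbb{E}^*(Z_+),
\]
as claimed.

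The only nontrivial step is the a priori bound $\hat\sigma_t(0)\sqrt t\to 0$, without which one cannot justify truncating the Taylor series of $\Phi$ at first order; once that is in hand, the rest is a one-line algebraic identification. Proposition \ref{PATM} does all the heavy lifting on the probabilistic side, so the proof here is essentially a calibration of the ATM Black--Scholes formula.
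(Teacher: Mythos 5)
Your proof is correct and follows essentially the same route as the paper: equate the at-the-money price from Proposition \ref{PATM} with the small-volatility expansion of the at-the-money Black--Scholes formula ($2\Phi(\tfrac12\hat\sigma_t(0)\sqrt t)-1\sim \hat\sigma_t(0)\sqrt t/\sqrt{2\pi}$), after noting that the dimensionless implied variance tends to zero. The paper phrases the matching as a two-sided $\delta$-sandwich rather than $o(1)$ bookkeeping, and your explicit derivation of $\hat\sigma_t(0)\sqrt t\to 0$ from monotonicity of $\Phi$ is a minor (valid) elaboration of what the paper simply recalls.
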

\begin{proof}
We first recall that the dimensionless implied variance $V(t,0)=\hat{\sigma}_t(0)^2 t \to 0$ as $t \to 0$.  Equating prices under the the L\'{e}vy model and the Black-Scholes model, we know that for any $\delta>0$, there exists a $t^*=t^*(\delta)$ such that for all $t<t^*$ we have
$$\mathbb{E}^{*}(Z_{+}) t^{1/Y} (1-\delta) \le \frac{1}{S_0}\mathbb{E}(S_t-S_0)^{+}\le \frac{\sqrt{V(t,0)}}{\sqrt{2 \pi}}(1+\delta)\,.$$
Re-arranging, we see that
$$ \frac{1-\delta}{1+\delta} \,\le \, \frac{\sqrt{V(t,0)}}{\sqrt{2 \pi} \,\mathbb{E}^{*}(Z_{+}) t^{1/Y}}\,.$$
We proceed similarly for the upper bound.
\end{proof}

\section{Robust pricing of variance call options at small maturities}
\label{section:VarianceCallOptions}

Let $(X_t)$ denote the general L\'{e}vy process defined in section \ref{section:SmallTimeLevy}.  The quadratic variation process $[X]_t=\sigma^2 t+\sum_{s \le t} (\Delta X_s)^2$ is a subordinator and has L\'{e}vy density given by
$$q(y)=\frac{\nu(\sqrt{y})}{2\sqrt{y}}+\frac{\nu(-\sqrt{y})}{2\sqrt{y}} \quad \quad \quad  (y>0)$$

\nind (see e.g. \cite{CGMY05}).  The function $f(y)=(y-K)_{+}$ for $K>0$ satisfies the conditions of Theorem 1.1 in Figueroa-L\'{o}pez\cite{FL08}, so we have
\begin{align}
\frac{1}{t}\,\mathbb{E}([X]_t-K)_{+}&= \int_0^{\infty} (y-K)_{+} q(y) dy +O(t)  \quad \quad \quad (t \to 0) \label{VCOAs} \\
&= \int_0^{\infty} (y-K)_{+} \, \big[\frac{\nu(\sqrt{y})}{2\sqrt{y}}+\frac{\nu(-\sqrt{y})}{2\sqrt{y}}\big] dy +O(t) \nn \\
&= \int_{-\infty}^{\infty} (x^2-K)_{+} \nu(x)  \,dx +O(t)\label{VCORO} \\
&=  \frac{1}{t} \,\mathbb{E}(X_t^2-K)_{+} +O(t)\nn \\
&=  \frac{1}{t} \,\mathbb{E}\big[(\ln\frac{S_t}{S_0})^2-K\big]^{+} +O(t) \quad \quad \quad (t \to 0)\,.\nn
\end{align}
 From this we see that an out-of-the-money variance call option of strike $K$ which pays $([X]_t-K)_{+}$ at time $t$ is worth the same as a European-style contract paying $((\ln\frac{S_t}{S_0})^2-K)_{+}$ at time $t$ as $t \to 0$, irrespective of $\nu(\cdot)$.  Note that the diffusion component of $X_t$ does not show up at leading order for small $t$.  We also remark that the higher order terms in (\ref{VCOAs}) and (\ref{VCORO}) can be obtained by using the expansions in Theorem \ref{ThFLH09} and the following identities:
 \begin{align*}
 	\mathbb{E} ([X]_{t}-K)_{+}=\int_{K}^{\infty}\mathbb{P}([X]_{t}\geq{}u) du,
	\quad \mathbb{E} (X_{t}^{2}-K)_{+}=\int_{\sqrt{K}}^{\infty} u \mathbb{P}(X_{t}\geq{}u)du.
 \end{align*}

\section{Numerical examples}\label{Sect:NmrExmpl}

In their seminal work, Carr et al.\cite{CGMY02} {calibrated} the CGMY model and the Variance Gamma (VG) model to option closing prices of several stocks and indices. In this section, we shall use some of their calibrated parameters to illustrate the approximation proposed in this paper. {As in Section \ref{section:SmallTimeLevy}, we are assuming below that the risk-free rate $r$ and the dividend rate $q$ are both set to be zero.}

{Using IBM closing option prices on February 10th, 1999 and maturities of 1 and 2 months, \cite{CGMY02} report the following calibrated parameters for the VG model:
\[
	\sigma=0.4344, \quad \nu=0.1083, \quad \theta=-.3726, \quad \eta=0.0051,
\]
where $\sigma$, $\nu$, and $\theta$ are the three parameters characterizing the VG process (see e.g. \cite{CT04}), and $\eta$ is the volatility of an additional independent Wiener component. In order to assess the accuracy of the call price approximation (\ref{eq:CallOptionAsymptotic}), we have plotted (in Figure \ref{Plot0}) the first and second order approximations of $\mathbb{E}(S_{t}-K)_{+}/t$ as a function of the log moneyness $k=\log K/S_{0}$ for $S_{0}=1$ and time-to-maturities $t=5/252$ and $t=10/252$ (in years). We have also plotted the ``true" option prices obtained via {an inverse Fourier Transform (IFT) method (see Theorem 5.1 in \cite{Lee04} for the case $G=G_1$ corresponding to the call option payoff with $\alpha>0$)}. Table \ref{Table1} also shows the numerical approximations for $1000\times \mathbb{E}(S_{t}-K)_{+}/t$ corresponding to four maturities, together with the numerical values obtained via the IFT. Note that the first order approximation (i.e. $ 1000\times \int_{-\infty}^{\infty} (e^x-e^k)_{+}\nu(x)dx$) is independent of time-to-maturity $t$. {The graphs show that the second order approximation significantly outperforms the first order approximation. The corresponding table shows that the second order approximation is quite good for maturities of 5 to 10 days and logmoneyness values larger than $0.1$.}

The numerical values via the IFT method were implemented in Mathematica, while the coefficient (\ref{ExprCoefd2}) was computed using numerical integration routines of Mathematica.  This computation is typically slow due to the singularity of the L\'evy density $\nu$ and the cumbersome double integrals. A much faster numerical method, valid for bounded variation L\'evy processes, is described in \cite{FL10} (see below for an illustration of this method).

\bs
\begin{figure}[htp]
    {\par \centering
    \includegraphics[width=8.0cm,height=8cm]{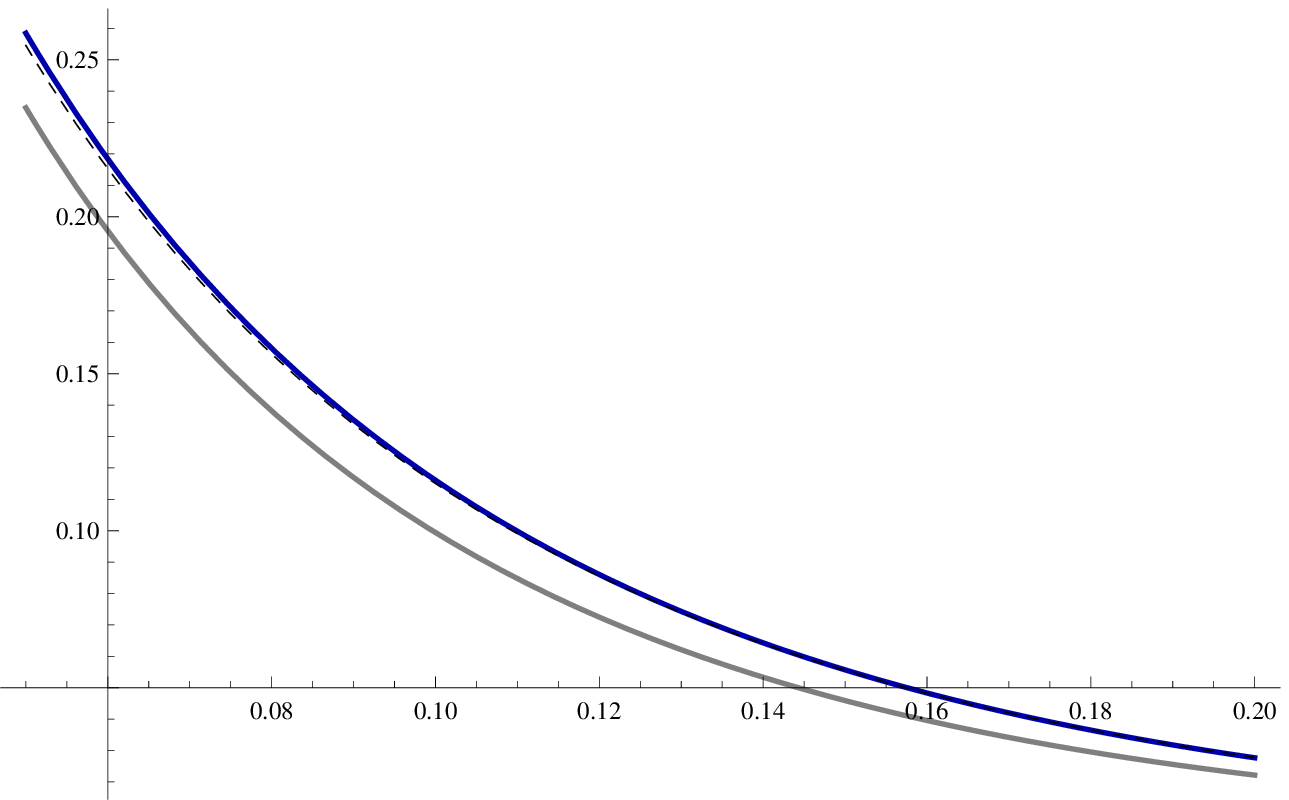}
    \includegraphics[width=8.0cm,height=8cm]{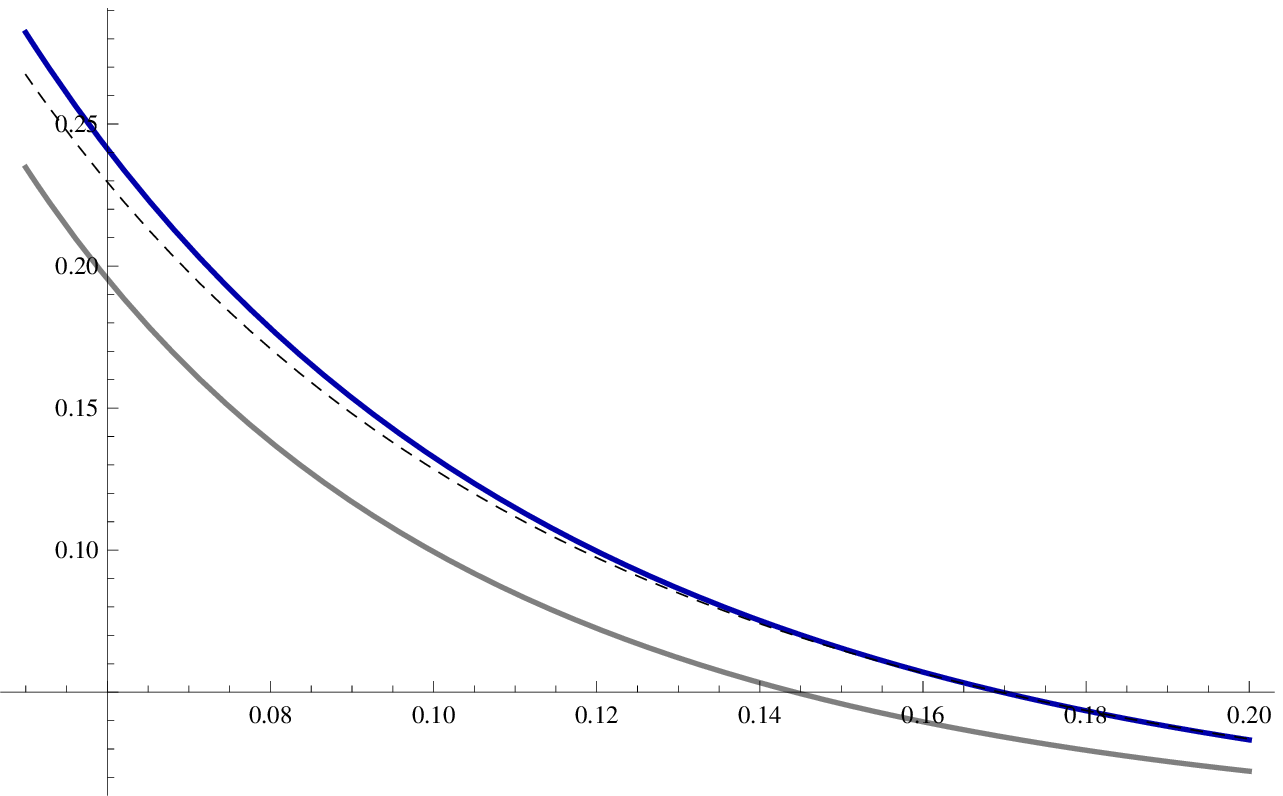}
    \par}
    \caption{Here we have plotted the leading order term (grey line) and the correction term (solid blue line) of the approximation (\ref{eq:CallOptionAsymptotic}) for $\frac{1}{t}\mathbb{E}(S_{t}-K)_{+}$ as a function of the log-moneyness $x=k=\log K/S_0$ for a Variance Gamma model with an independent Brownian component. The parameters of the VG model are $\sigma=0.4344$, $\nu=0.1083$, and $\theta=-.3726$, while the volatility of the independent continuous component is $\eta=0.0051$.  Left  and right panels corresponds to the expiration times $t=5/252$ and $t=10/252$, respectively. The numerical ``true" option prices obtained via the IFT are also shown (dashed grey line).}\label{Plot0}
\end{figure}

\begin{table}
\begin{center}
 \begin{tabular}{|c|c|c|c|c|c|c|c|c|c|}\hline
\multicolumn{2}{|c|}{Time-to-mat. t} &\multicolumn{2}{|c|}{1/252} &\multicolumn{2}{|c|}{5/252} &\multicolumn{2}{|c|}{10/252}&\multicolumn{2}{|c|}{20/252}\\
 \hline
 x & 1st & 2nd & IFT & 2nd & IFT & 2nd & IFT &2nd & IFT\\
\hline
0.05	&	234.6977	&	239.4463	&	239.2843	&	258.4404	&	254.5295	&	282.1831	&	267.3434	&	329.6684	&	277.3445	\\
0.06	&	195.4777	&	200.0560	&	199.9317	&	218.3694	&	215.3264	&	241.2611	&	229.5224	&	287.0445	&	244.4061	\\
0.07	&	163.8997	&	168.2079	&	168.1131	&	185.4408	&	183.0887	&	206.9820	&	197.7644	&	250.0643	&	215.6399	\\
0.08	&	138.1606	&	142.1521	&	142.0805	&	158.1182	&	156.3154	&	178.0757	&	170.8989	&	217.9909	&	190.4486	\\
0.09	&	116.9799	&	120.6392	&	120.5857	&	135.2765	&	133.9099	&	153.5732	&	148.0422	&	190.1665	&	168.3418	\\
0.1	&	99.4165	&	102.7465	&	102.7072	&	116.0661	&	115.0451	&	132.7157	&	128.5074	&	166.0149	&	148.9089	\\
0.11	&	84.7611	&	87.7748	&	87.7466	&	99.8297	&	99.0818	&	114.8984	&	111.7494	&	145.0357	&	131.8027	\\
0.12	&	72.4675	&	75.1840	&	75.1644	&	86.0500	&	85.5170	&	99.6325	&	97.3285	&	126.7974	&	116.7270	\\
0.13	&	62.1087	&	64.5497	&	64.5368	&	74.3137	&	73.9493	&	86.5186	&	84.8855	&	110.9285	&	103.4274	\\
0.14	&	53.3465	&	55.5346	&	55.5269	&	64.2872	&	64.0541	&	75.2279	&	74.1246	&	97.1093	&	91.6844	\\
0.15	&	45.9096	&	47.8674	&	47.8636	&	55.6984	&	55.5669	&	65.4873	&	64.7996	&	85.0649	&	81.3079	\\
0.16	&	39.5787	&	41.3278	&	41.3269	&	48.3238	&	48.2701	&	57.0689	&	56.7045	&	74.5590	&	72.1326	\\
0.17	&	34.1752	&	35.7358	&	35.7372	&	41.9783	&	41.9835	&	49.7815	&	49.6660	&	65.3878	&	64.0145	\\
0.18	&	29.5521	&	30.9433	&	30.9463	&	36.5080	&	36.5571	&	43.4639	&	43.5376	&	57.3758	&	56.8280	\\
0.19	&	25.5884	&	26.8275	&	26.8317	&	31.7841	&	31.8651	&	37.9799	&	38.1947	&	50.3714	&	50.4628	\\
0.2	&	22.1834	&	23.2864	&	23.2913	&	27.6985	&	27.8019	&	33.2136	&	33.5313	&	44.2438	&	44.8227	\\
 \hline \end{tabular}
 \end{center}
 \caption{
Approximations (\ref{eq:CallOptionAsymptotic}) for $1000\times \frac{1}{t}\mathbb{E}(S_{t}-K)_{+}$ as a function of the log-moneyness $x=k=\log K/S_0$ for a Variance Gamma model with an independent Brownian component. The parameters of the VG model are $\sigma=0.4344$, $\nu=0.1083$,  and $\theta=-.3726$, while the volatility of the continuous component is $\eta=0.0051$.  The column ``1st" indicates the first order approximation (which is independent of $t$). The column ``2nd" refers to the second order approximation term.}\label{Table1}
 \end{table}

In order to illustrate the performance of the approximations for larger volatility values, we now consider the parameters:
\[
	\sigma=0.1452,\quad   \theta=-0.1497, \quad  \nu= 0.1536,\quad \eta=0.0869,
\]
which were calibrated to fit INTEL option data as reported in \cite{CGMY02}. The results are shown in Figure \ref{Plot0} for $S_{0}=1$ and time-to-maturities $t=5/252$ and $t=10/252$ (in years). Table \ref{Table2} shows the numerical approximations for $1000\times\mathbb{E}(S_{t}-K)_{+}/t$ corresponding to four maturities. We also show the numerical values obtained via the IFT. {The second order approximation is again quite good for mid-range log-moneyness values and no noticeable difference is observed even though $\eta$ is significantly larger}.

\bs
\begin{figure}[htp]
    {\par \centering
    \includegraphics[width=8.0cm,height=8cm]{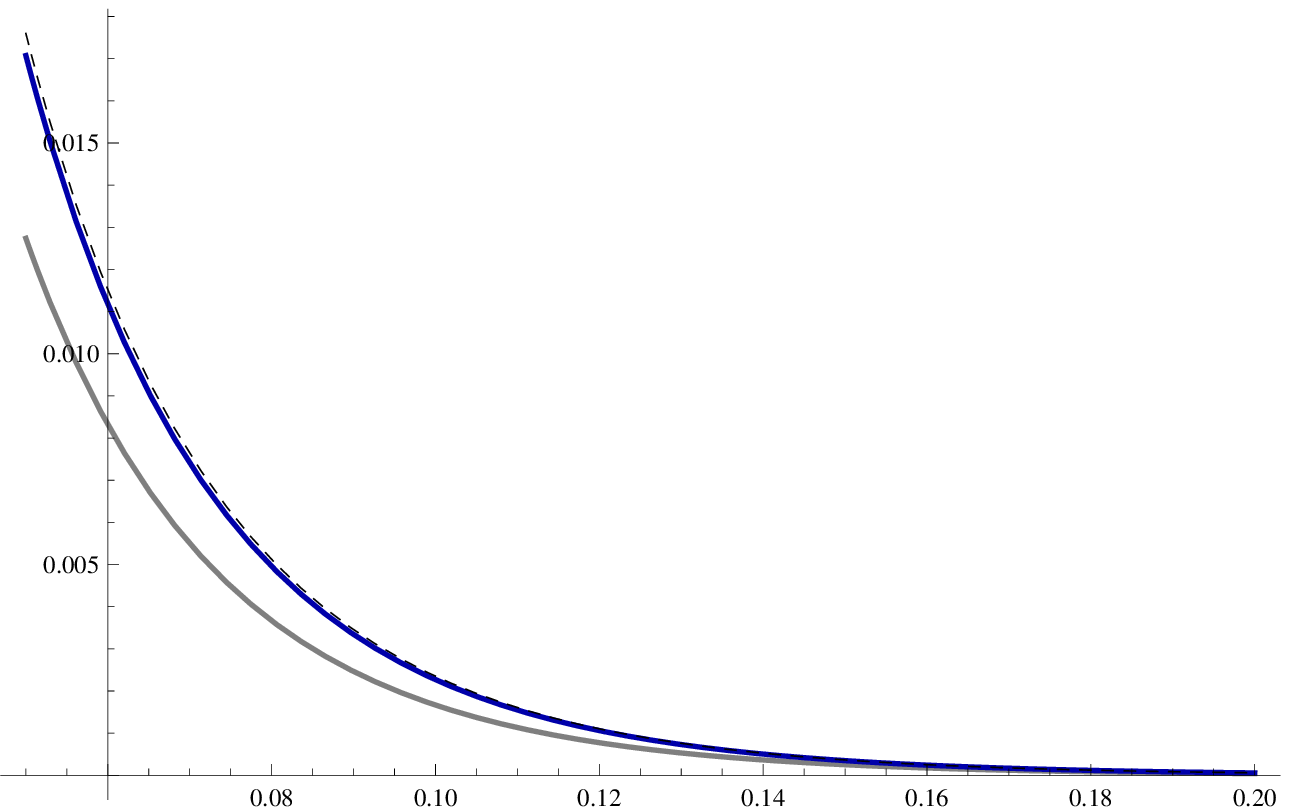}
    \includegraphics[width=8.0cm,height=8cm]{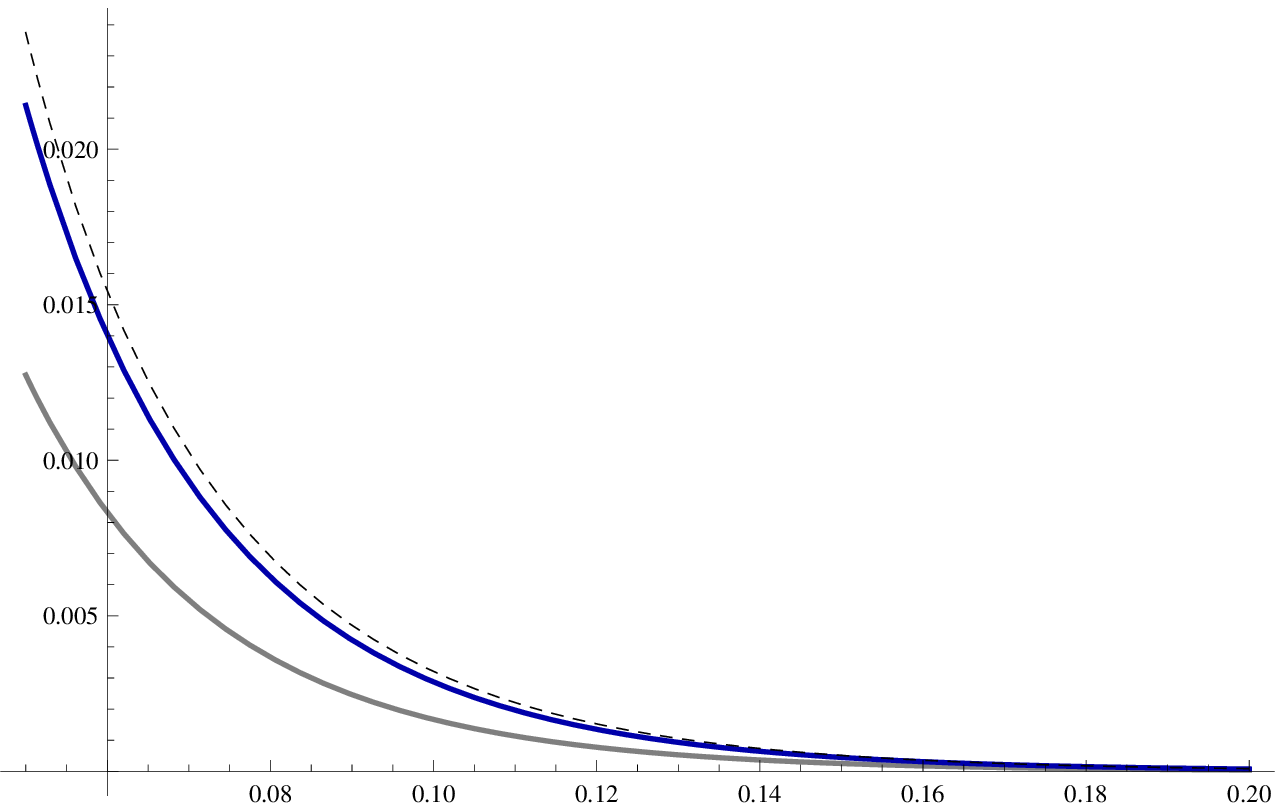}
    \par}
    \caption{Here we have plotted the leading order term (grey line) and the correction term (solid blue line) of the approximation (\ref{eq:CallOptionAsymptotic}) for $\frac{1}{t}\mathbb{E}(S_{t}-K)_{+}$ as a function of the log-moneyness $x=k=\log K/S_0$ for a Variance Gamma model with an independent Brownian component. The parameters of the VG model are $\sigma=0.1452$, $\theta=-0.1497$, $\nu= 0.1536$, while the volatility of the independent continuous component is $\eta=0.0869$.  Left  and right panels corresponds to the expiration times $t=5/252$ and $t=10/252$, respectively. The numerical ``true" option prices obtained via the IFT are also shown (dashed grey line).
    }\label{Plot1}
\end{figure}

\begin{table}
\begin{center}
 \begin{tabular}{|c|c|c|c|c|c|c|c|c|c|}\hline
\multicolumn{2}{|c|}{Time-to-mat. t} &\multicolumn{2}{|c|}{1/252} &\multicolumn{2}{|c|}{5/252} &\multicolumn{2}{|c|}{10/252}&\multicolumn{2}{|c|}{20/252}\\
 \hline
 x & 1st & 2nd & IFT & 2nd & IFT & 2nd & IFT &2nd & IFT\\
\hline
0.05	&	12.7382	&	13.6052	&	13.6253	&	17.0732	&	17.5978	&	21.4081	&	23.7455	&	30.0780	&	36.6508	\\
0.06	&	8.3203	&	8.8906	&	8.9038	&	11.1717	&	11.5085	&	14.0232	&	15.4255	&	19.7261	&	24.6815	\\
0.07	&	5.4984	&	5.8797	&	5.8887	&	7.4046	&	7.6352	&	9.3108	&	10.2499	&	13.1232	&	16.7357	\\
0.08	&	3.6672	&	3.9249	&	3.9312	&	4.9559	&	5.1175	&	6.2446	&	6.9034	&	8.8221	&	11.4468	\\
0.09	&	2.4641	&	2.6398	&	2.6443	&	3.3426	&	3.4572	&	4.2212	&	4.6912	&	5.9782	&	7.8929	\\
0.1	&	1.6660	&	1.7865	&	1.7897	&	2.2687	&	2.3504	&	2.8714	&	3.2090	&	4.0769	&	5.4783	\\
0.11	&	1.1323	&	1.2154	&	1.2177	&	1.5479	&	1.6063	&	1.9635	&	2.2067	&	2.7947	&	3.8216	\\
0.12	&	0.7730	&	0.8306	&	0.8322	&	1.0608	&	1.1027	&	1.3485	&	1.5239	&	1.9241	&	2.6763	\\
0.13	&	0.5298	&	0.5698	&	0.5709	&	0.7297	&	0.7598	&	0.9297	&	1.0562	&	1.3295	&	1.8801	\\
0.14	&	0.3643	&	0.3922	&	0.3930	&	0.5037	&	0.5252	&	0.6430	&	0.7343	&	0.9216	&	1.3240	\\
0.15	&	0.2513	&	0.2708	&	0.2714	&	0.3486	&	0.3641	&	0.4460	&	0.5119	&	0.6406	&	0.9344	\\
0.16	&	0.1738	&	0.1875	&	0.1879	&	0.2420	&	0.2531	&	0.3101	&	0.3577	&	0.4464	&	0.6607	\\
0.17	&	0.1205	&	0.1301	&	0.1304	&	0.1683	&	0.1763	&	0.2161	&	0.2504	&	0.3117	&	0.4679	\\
0.18	&	0.0837	&	0.0905	&	0.0907	&	0.1173	&	0.1231	&	0.1509	&	0.1757	&	0.2181	&	0.3318	\\
0.19	&	0.0583	&	0.0630	&	0.0632	&	0.0819	&	0.0861	&	0.1056	&	0.1234	&	0.1528	&	0.2356	\\
0.2	&	0.0407	&	0.0440	&	0.0441	&	0.0573	&	0.0603	&	0.0740	&	0.0869	&	0.1073	&	0.1675	\\
 \hline \end{tabular}
 \end{center}
 \caption{
Approximations (\ref{eq:CallOptionAsymptotic}) for $1000\times \frac{1}{t}\mathbb{E}(S_{t}-K)_{+}$ as a function of the log-moneyness $x=k=\log K/S_0$ for a Variance Gamma model with an independent Brownian component. The parameters of the VG model are $\sigma=0.1452$, $\theta=-0.1497$, $\nu= 0.1536$, while the volatility of the continuous component is $\eta=0.0869$.  The column ``1st" indicates the first order approximation (which is independent of $t$). The column ``2nd" refers to the second order approximation term.     }\label{Table2}
 \end{table}

For the case of Microsoft option prices on December 9th, 1999 and maturities of 1 and 2 months, \cite{CGMY02} report the following parameters for a CGMY model:
\[
	C=1.1, \quad G=5.09, \quad M=8.6, \quad Y=0.4456.
\]
Table \ref{Table3} shows the numerical approximations for $1000\times \mathbb{E}(S_{t}-K)_{+}/t$ corresponding to four maturities, together with the numerical values obtained via the IFT (computed using Mathematica).  {As before, the approximations perform quite well and we are able to attain a decent approximation even for a maturity of 20 days}. To compute the second order approximations (or more specifically, to compute the coefficient (\ref{ExprCoefd2})), we have employed the method in \cite{FL10}.

\bs
{\small
\begin{table}
\begin{center}
 \begin{tabular}{|c|c|c|c|c|c|c|c|c|c|}\hline
\multicolumn{2}{|c|}{Time-to-mat. t} &\multicolumn{2}{|c|}{1/252} &\multicolumn{2}{|c|}{5/252} &\multicolumn{2}{|c|}{10/252}&\multicolumn{2}{|c|}{20/252}\\
 \hline
 x & 1st & 2nd & IFT & 2nd & IFT & 2nd & IFT &2nd & IFT\\
\hline
0.05	&	118.8662	&	120.2883	&	120.5386	&	125.9768	&	125.9179	&	133.0875	&	131.5844	&	147.3088	&	139.5891	\\
0.06	&	99.6004	&	100.8808	&	101.1351	&	106.0023	&	106.0868	&	112.4042	&	111.5177	&	125.2081	&	119.9024	\\
0.07	&	84.3149	&	85.4610	&	85.7023	&	90.0455	&	90.1924	&	95.7760	&	95.2726	&	107.2372	&	103.5827	\\
0.08	&	71.9095	&	72.9321	&	73.1727	&	77.0226	&	77.2339	&	82.1358	&	81.9201	&	92.3620	&	89.9114	\\
0.09	&	61.7191	&	62.6303	&	62.8747	&	66.2750	&	66.5275	&	70.8309	&	70.8150	&	79.9426	&	78.3608	\\
0.1	&	53.2682	&	54.0799	&	54.3141	&	57.3264	&	57.5892	&	61.3846	&	61.4910	&	69.5011	&	68.5328	\\
0.11	&	46.1664	&	46.8892	&	47.1192	&	49.7805	&	50.0626	&	53.3947	&	53.6011	&	60.6229	&	60.1205	\\
0.12	&	40.1763	&	40.8204	&	41.0433	&	43.3967	&	43.6782	&	46.6171	&	46.8806	&	53.0579	&	52.8833	\\
0.13	&	35.0705	&	35.6445	&	35.8690	&	37.9408	&	38.2302	&	40.8111	&	41.1241	&	46.5517	&	46.6292	\\
0.14	&	30.7034	&	31.2154	&	31.4361	&	33.2632	&	33.5566	&	35.8230	&	36.1693	&	40.9425	&	41.2037	\\
0.15	&	26.9570	&	27.4140	&	27.6311	&	29.2418	&	29.5285	&	31.5266	&	31.8864	&	36.0962	&	36.4806	\\
0.16	&	23.7163	&	24.1244	&	24.3391	&	25.7565	&	26.0433	&	27.7968	&	28.1703	&	31.8772	&	32.3565	\\
0.17	&	20.9085	&	21.2731	&	21.4858	&	22.7315	&	23.0167	&	24.5545	&	24.9355	&	28.2005	&	28.7454	\\
0.18	&	18.4722	&	18.7982	&	19.0082	&	20.1025	&	20.3798	&	21.7327	&	22.1107	&	24.9933	&	25.5756	\\
0.19	&	16.3432	&	16.6349	&	16.8407	&	17.8017	&	18.0761	&	19.2602	&	19.6377	&	22.1771	&	22.7868	\\
0.2	&	14.4852	&	14.7463	&	14.9482	&	15.7910	&	16.0580	&	17.0968	&	17.4672	&	19.7084	&	20.3280	\\
0.21	&	12.8531	&	13.0870	&	13.2891	&	14.0226	&	14.2859	&	15.1920	&	15.5580	&	17.5310	&	18.1563	\\
0.22	&	11.4193	&	11.6289	&	11.8268	&	12.4672	&	12.7267	&	13.5150	&	13.8752	&	15.6108	&	16.2344	\\
0.23	&	10.1595	&	10.3474	&	10.5434	&	11.0990	&	11.3517	&	12.0385	&	12.3891	&	13.9176	&	14.5312	\\
0.24	&	9.0459	&	9.2145	&	9.4085	&	9.8885	&	10.1371	&	10.7310	&	11.0744	&	12.4161	&	13.0193	\\
0.25	&	8.0621	&	8.2133	&	8.4040	&	8.8179	&	9.0625	&	9.5737	&	9.9096	&	11.0853	&	11.6753	\\
0.26	&	7.1931	&	7.3287	&	7.4365	&	7.8714	&	8.1099	&	8.5498	&	8.8759	&	9.9065	&	10.4792	\\
0.27	&	6.4212	&	6.5430	&	6.7291	&	7.0301	&	7.2645	&	7.6389	&	7.9573	&	8.8567	&	9.4132	\\
0.28	&	5.7374	&	5.8468	&	5.8054	&	6.2842	&	6.5132	&	6.8309	&	7.1400	&	7.9243	&	8.4622	\\
0.29	&	5.1285	&	5.2267	&	5.4878	&	5.6194	&	5.8445	&	6.1103	&	6.4118	&	7.0920	&	7.6128	\\
0.3	&	4.5867	&	4.6749	&	4.8038	&	5.0275	&	5.2487	&	5.4683	&	5.7624	&	6.3499	&	6.8534	\\
0.31	&	4.1050	&	4.1842	&	3.4559	&	4.5009	&	4.7173	&	4.8968	&	5.1826	&	5.6886	&	6.1739	\\
0.32	&	3.6746	&	3.7457	&	3.7292	&	4.0301	&	4.2427	&	4.3856	&	4.6643	&	5.0966	&	5.5652	\\
0.33	&	3.2905	&	3.3543	&	3.6098	&	3.6097	&	3.8185	&	3.9289	&	4.2006	&	4.5673	&	5.0195	\\
0.34	&	2.9479	&	3.0053	&	3.2470	&	3.2346	&	3.4391	&	3.5212	&	3.7855	&	4.0944	&	4.5299	\\
0.35	&	2.6410	&	2.6925	&	2.8716	&	2.8983	&	3.0991	&	3.1555	&	3.4134	&	3.6701	&	4.0903	\\
 \hline \end{tabular}
 \end{center}
 \caption{
Approximations (\ref{eq:CallOptionAsymptotic}) for $1000\times \frac{1}{t}\mathbb{E}(S_{t}-K)_{+}$ as a function of the log-moneyness $x=k=\log K/S_0$ for the CGMY model with parameter values $C=1.1$, $G=5.09$, $M=8.6$, and $Y=0.4456$.  The column ``1st" indicates the first order approximation (which is independent of $t$). The column ``2nd" refers to the second order approximation term.    }\label{Table3}
 \end{table}
}}

We now proceed to illustrate the performance of the implied volatility approximations described in Section \ref{SSec:ImplVol}. Concretely, we analyze the relative error of the approximations
\be
\label{eq:ImpliedVolatilityLevyv2}
\tilde{\sigma}_{t,1}(k)= \sqrt{\frac{V_{0}(t,k)}{t}},\quad
\tilde{\sigma}_{t,2}(k)= \sqrt{\frac{V_{0}(t,k)(1+V_{1}(t,k))}{t}}.
\ee
Let us first analyze the Variance Gamma model with parameter values as above. The left panel of Figure \ref{Plot3} shows the relative errors $(\tilde{\sigma}_{t,1}-\hat\sigma_{t})/\hat\sigma_{t}$ and $(\tilde{\sigma}_{t,2}-\hat\sigma_{t})/\hat\sigma_{t}$ as a function of time-to-maturity $t$ for values of $k$ ranging from $0.1$ to $0.3$. Note that both $\tilde\sigma_{t,1}$ and $\tilde\sigma_{t,2}$ consistently underestimate the true implied volatility. For $k=0.3$, the first order approximation is actually quite good with a relative error of about $-5\%$ uniformly in $t$ and it is only for very small values $t$ (less than $3$ days) when $\tilde{\sigma}_{t,2}$  is better than $\tilde{\sigma}_{t,1}$. However, for the other values of $k$, $\tilde{\sigma}_{t,2}$  significantly outperforms $\tilde{\sigma}_{t,1}$. For instance, for $k=0.2$, the relative error of  $\tilde\sigma_{t,1}$ ranges from $-19\%$ to $-34\%$ with a mean absolute error of $27.0\%$, while the relative error of  $\tilde\sigma_{t,2}$ rages from $-4.4\%$ to $-23\%$ with a mean absolute error of $14.2\%$. The left panel of Figure \ref{Plot5} compares the term structure of the approximated implied volatilities to the ``true" implied volatility\footnote{The ``true" implied volatility is actually an approximation as we apply numerical integration to compute $\mathbb{E}(S_{T}-K)_{+}$ using the closed-form density of the VG process. This approximation seems not to be very accurate for $t$ smaller than $3$ days.}.  The right panel of Figure \ref{Plot3} shows the analog results for the CGMY with parameter values as above. The results are qualitatively similar to those of the Variance Gamma model. However, all the approximations seem to perform better in terms of error stability in time and accuracy. For $k=0.2$, the relative error of $\tilde\sigma_{t,1}$ ranges from $-12\%$ to $-20\%$  with a mean absolute error of $18.6\%$, while the relative error of $\tilde\sigma_{t,2}$ ranges from $0.83\%$ to $-13\%$  with a mean absolute error of $9.25\%$. The right panel of Figure \ref{Plot5} compares the term structure of the approximated implied volatilities to the ``true" implied volatility\footnote{The true implied volatility is computed by integrating numerically the the density of the CGMY model, which itself is obtained by Fast Fourier methods.}.

 \begin{figure}[htp]
    {\par \centering
    \includegraphics[width=8.0cm,height=8cm]{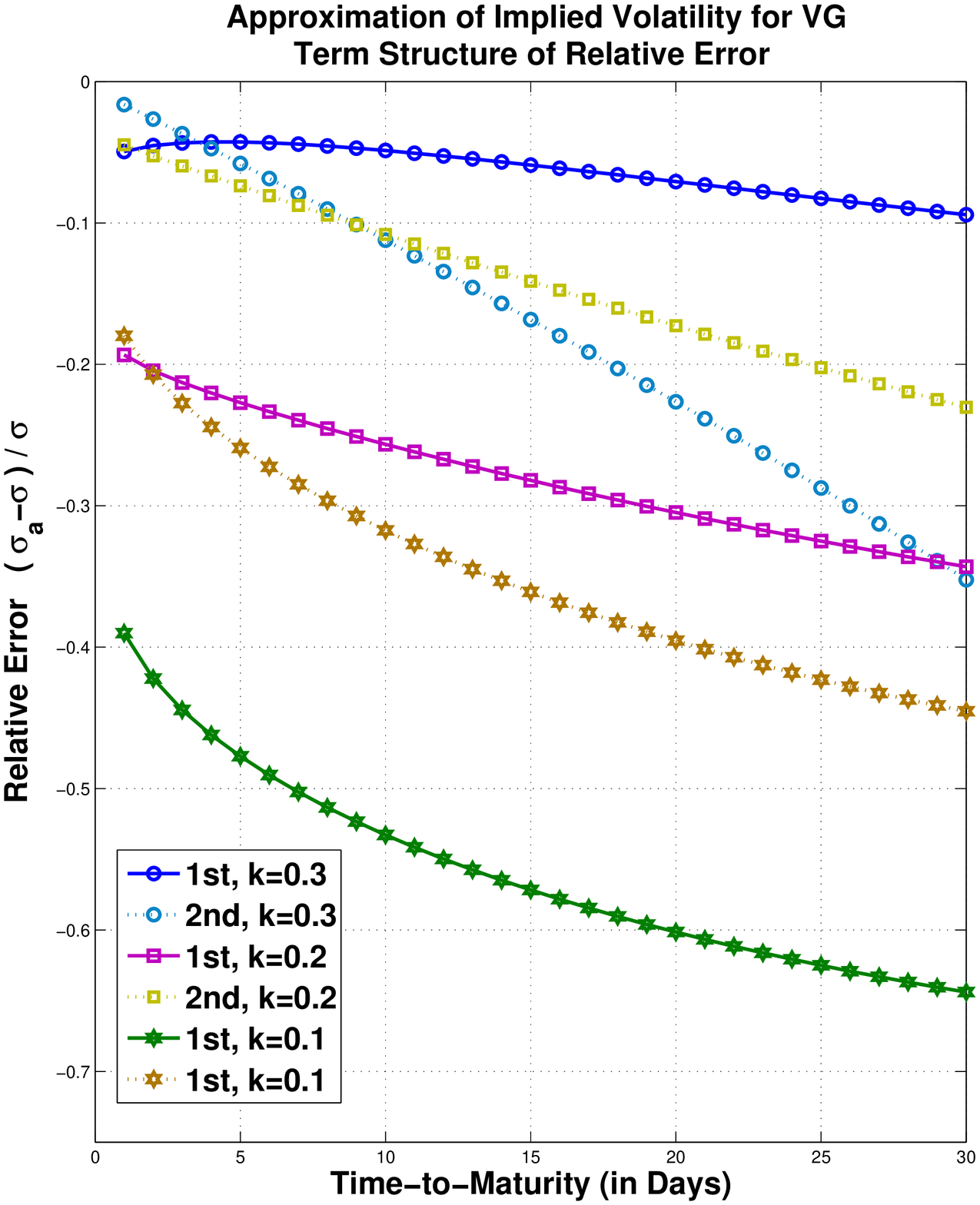}
    \includegraphics[width=8.0cm,height=8cm]{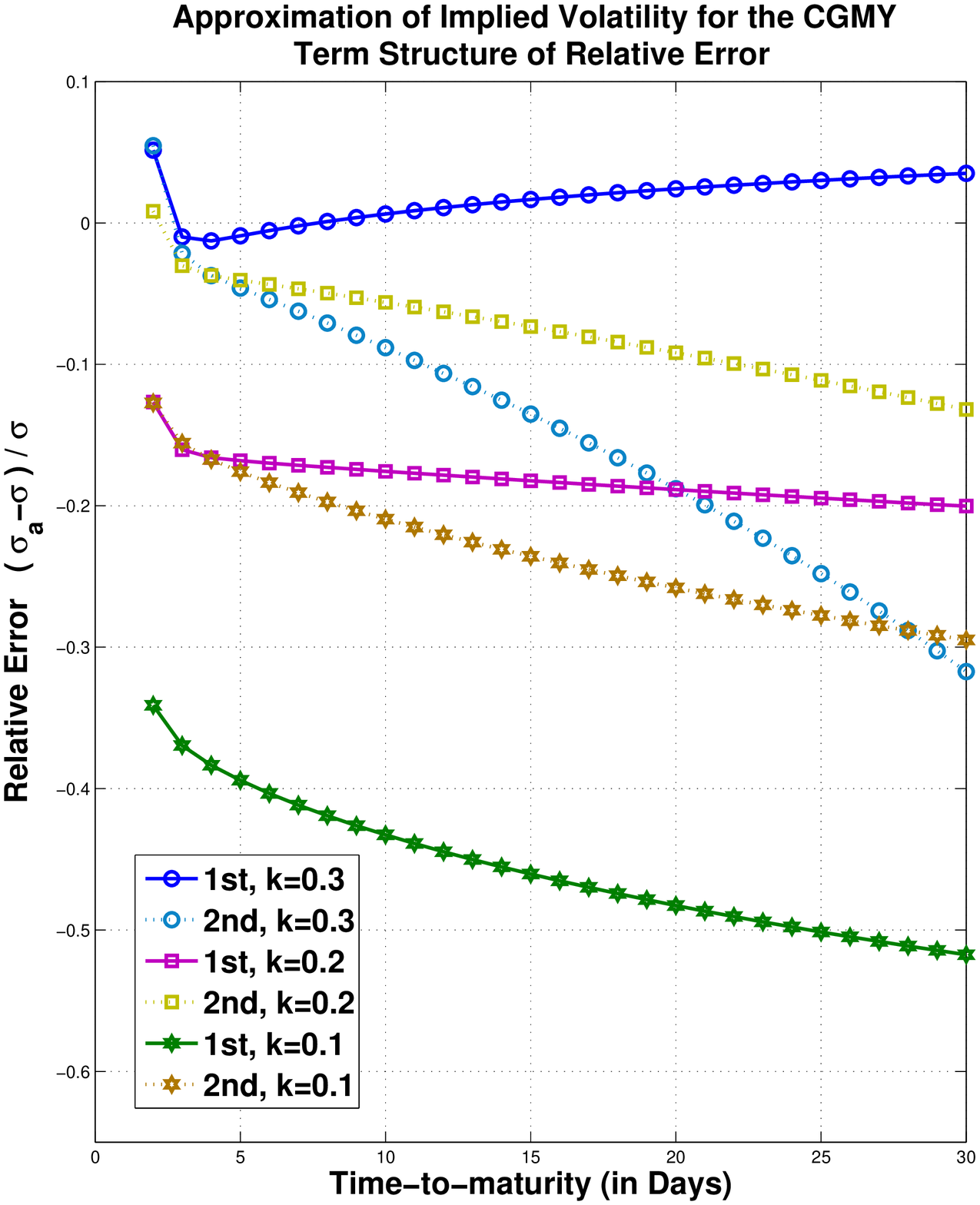}
    \par}\vspace{-.8 cm}
    \caption{Relative errors of the implied volatility approximations for the VG and CGMY models as function of time to maturity {using} the two estimators $\tilde{\sigma}_{t,1}$ and $\tilde{\sigma}_{t,2}$ in (\ref{eq:ImpliedVolatilityLevyv2}).
    }\label{Plot3}
\end{figure}
 \begin{figure}[htp]
    {\par \centering
  \includegraphics[width=8.0cm,height=8cm]{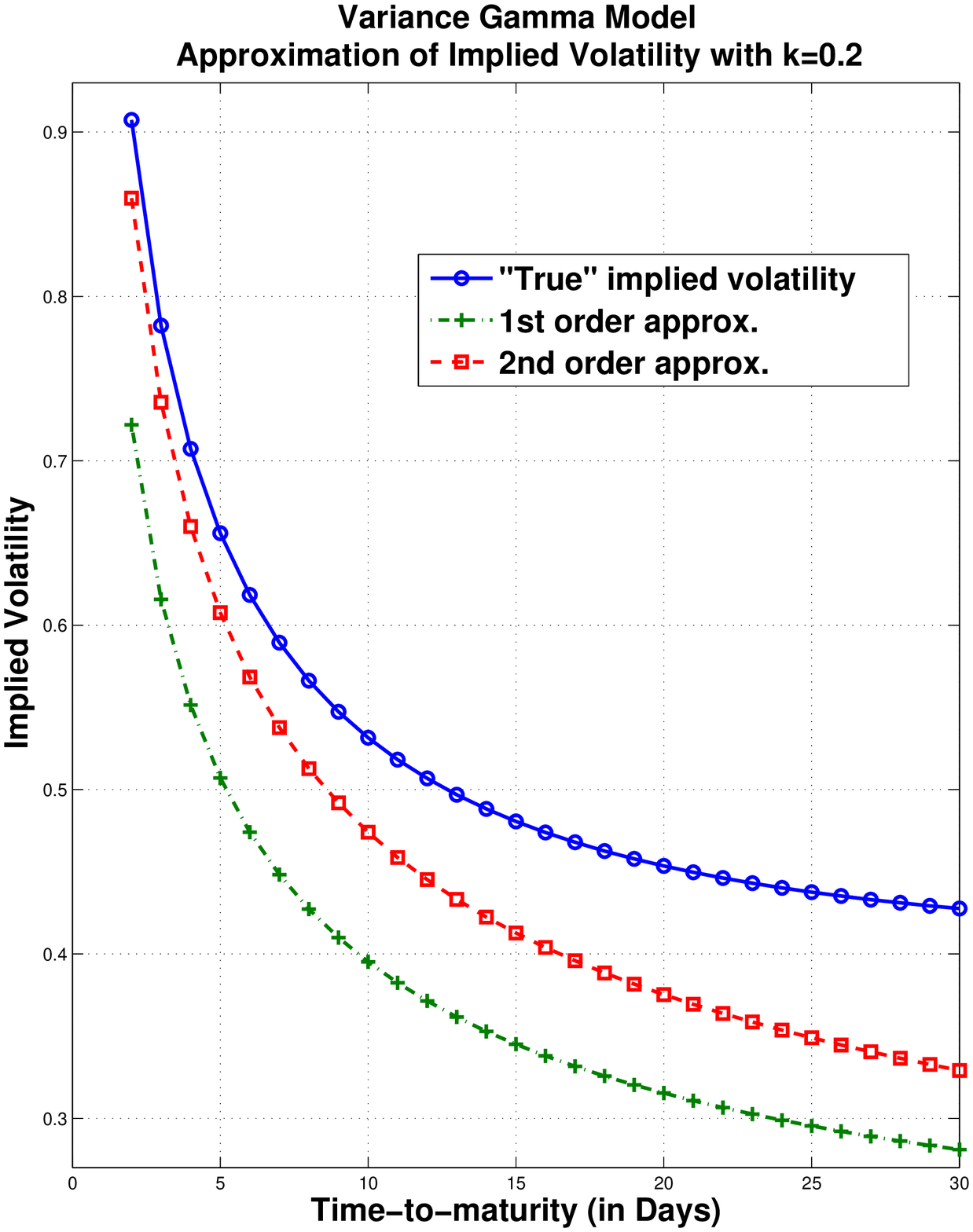}
    \includegraphics[width=8.0cm,height=8cm]{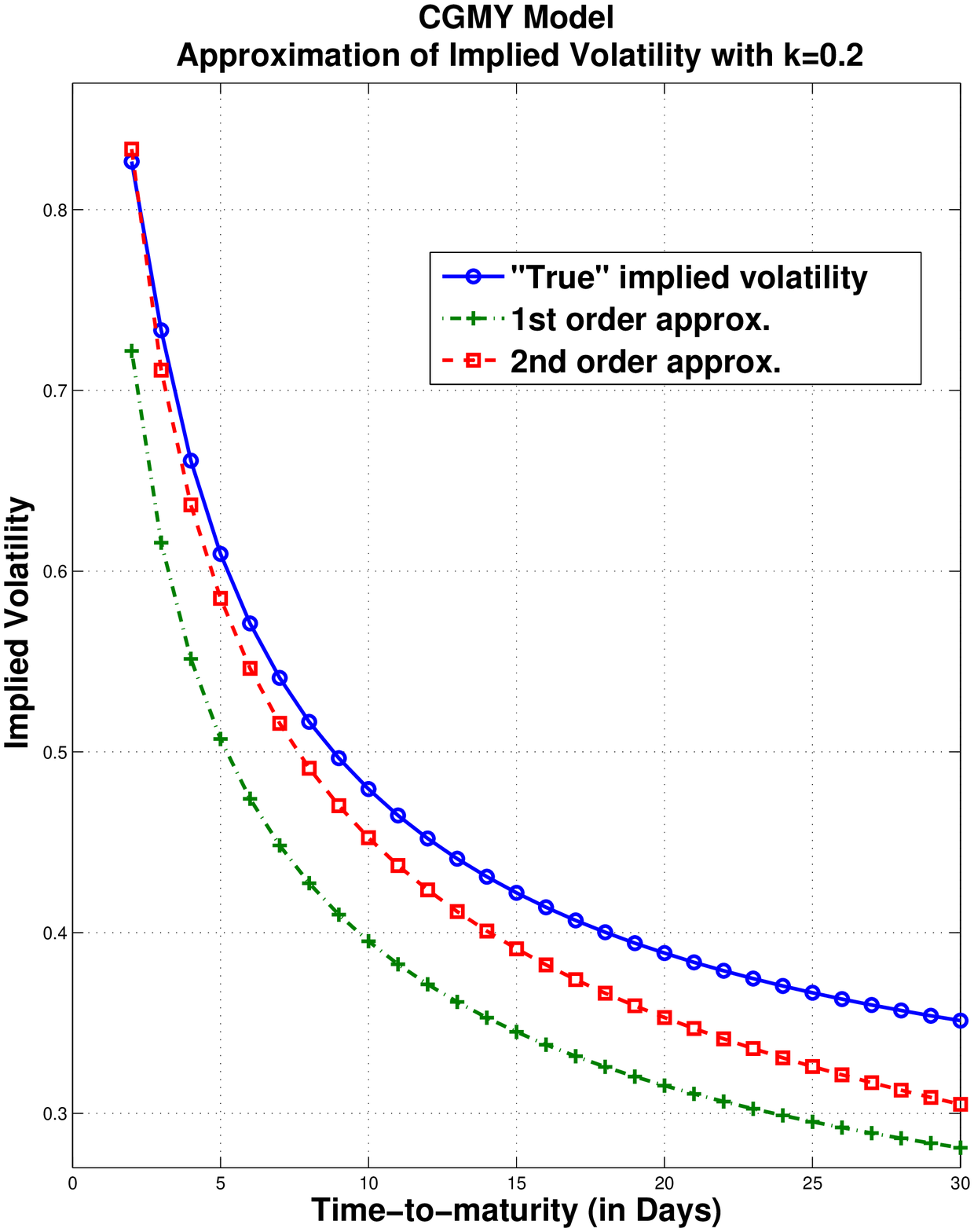}
    \par}\vspace{-.8 cm}
    \caption{Term structure of implied volatility approximations for the Variance Gamma model (left panel) and the CGMY model (right panel) using the estimators $\tilde{\sigma}_{t,1}$ and $\tilde{\sigma}_{t,2}$ in (\ref{eq:ImpliedVolatilityLevyv2}).
    }\label{Plot5}
\end{figure}


\vspace{.5 cm}
\noindent
{\bf Acknowledgments:}
It is a pleasure to thank Michael Roper for pointing up several notational mistakes and other helpful comments. We would also like to thank Peter Tankov for providing us with his manuscript.

\appendix
\renewcommand{\theequation}{A-\arabic{equation}}
\setcounter{equation}{0}  
\section{Proofs}  
\label{section:Proofs} 

\begin{proof}[Proof of Theorem \ref{thm:LevyImpliedVariance}]

\nind We know that $V(t,k) \to 0$.  Equating call prices in the small-time limit under the exponential L\'{e}vy model (using Proposition \ref{prop:SmallTimeCallsLevy}), and the Black-Scholes model with zero interest rates and implied variance $V=V(t,k)$ (using e.g. {Proposition 3.4} in \cite{FJL10} or Lemma 2.5 in \cite{GHLOW09}) we know that for any $\delta>0$, there exists a $t^*=t^*(\delta)$ such that for all $t<t^*$
\bq t a_0(k)(1-\delta)  \, \le \, \frac{1}{S_0}\mathbb{E}(S_t-K)^{+} \,\le \,
e^{-\frac{1}{2}k^2(1-\delta)/V(t,k)} \,.\eq

Re-arranging, we see that
$$ -V(t,k) \log[t a_0(k)(1-\delta)]  \, \ge \,
\frac{1}{2}k^2(1-\delta) \,,$$
\nind or
$$ V(t,k) \cdot \log(\frac{1}{t})  \, \ge \,
\frac{1}{2}k^2(1-\delta) + V(t,k)\log(1-\delta) +V(t,k)\log a_0(k) \,.$$

$V(t,k) \to 0$, so this yields a lower bound for $V(t,k)$.  Using a similar argument for the corresponding upper bound, we establish the leading order asymptotic behaviour for the implied variance as
\be
V(t,k) \,\sim V_0(t,k):= \frac{\frac{1}{2}k^2}{\log (\frac{1}{t})} \quad \quad \quad \quad \quad (t \to 0)\,. \ee
\nind Now let
$
V(t,k)=V_0(t,k)\big[1+\tilde{V}_1(t,k)\big]\,
$
\nind
and note that $\tilde{V}_1(t,k)=o(1)$ as $t\to{}0$.  Then for any $\delta>0$, there exists a $t^{**}=t^{**}(\delta)$ such that for $t<t^{**}$ we have
\be
\frac{1}{t}[\frac{1}{t}\,\frac{1}{S_0}\mathbb{E}(S_t-K)^{+}-a_0(k)]-a_1(k) \ge -\delta\,.
\ee
Re-arranging, we have
\be
t a_0(k) + (a_1(k)-\delta)t^2 \,\le\, \frac{1}{S_0}\mathbb{E}(S_t-K)^{+} \,.
\ee
Using this bound and again equating small-time call prices under the L\'{e}vy model and the Black-Scholes model, we have that there exists a positive constant $c$ such that for $t$ small enough
\bq
  t a_0(k) + (a_1(k)-\delta)t^2 \, \le \, \frac{1}{S_0}\mathbb{E}(S_t-K)^{+} &\le & \,  \frac{e^{\frac{1}{2}k}V(t,k)^{\frac{3}{2}}}{\sqrt{2\pi} \,k^2}e^{-\frac{1}{2}k^2/V(t,k)}(1+ c V(t,k))\nn \\
 &=& \frac{e^{\frac{1}{2}k}V_0(t,k)^{\frac{3}{2}}(1+\tilde{V}_1(t,k))^{\frac{3}{2}}}{\sqrt{2\pi} \,k^2}e^{-\frac{1}{2}k^2/\{V_0(t,k)(1+\tilde{V}_1(t,k))\}}(1+c V(t,k)) \quad \quad \quad  \nn \\
 &\le& \frac{e^{\frac{1}{2}k}V_0(t,k)^{\frac{3}{2}}}{\sqrt{2\pi}\,k^2}e^{-\frac{1}{2}k^2/\{V_0(t,k)(1+\tilde{V}_1(t,k))\}}(1+\mathcal{E}(t,k)), \quad \quad \quad \quad  \quad (t \to 0)\,.\nn
\eq
where $\mathcal{E}(t,k):=(1+\tilde{V}_{1}(t,k))^{3/2}(1+c V(t,k))-1$, which converges to $0$ as $t\to{}0$.
Dividing both sides by $t=e^{-\frac{1}{2}k^2/V_0(t,k)}$ we have
$$
a_{0}(k) +(a_1(k)-\delta)t \,\le \, \frac{e^{\frac{1}{2}k}V_0(t,k)^{\frac{3}{2}}}{\sqrt{2\pi}\, k^2}
\,e^{\frac{1}{2}k^2 \tilde{V}_1(t,k)/\{V_0(t,k)(1+\tilde{V}_1(t,k))\}}(1+\mathcal{E}(t,k)) \,,$$
\nind and re-arranging we obtain
\bq
\frac{\tilde{V}_1(t,k)}{1+\tilde{V}_1(t,k)} \,&\ge & \, \frac{2}{k^2}V_0(t,k) \log\big[ (a_0(k) + (a_1(k)-\delta)t)\sqrt{2\pi}\,k^2 e^{-\frac{1}{2}k}V_0(t,k)^{-\frac{3}{2}}/(1+\mathcal{E}(t,k))\big]\,.\nn \\
 \,&= & \, \frac{2}{k^2}V_0(t,k) \log\big[ (a_0(k)+t\,a_1(k))\sqrt{2\pi}\,k^2 e^{-\frac{1}{2}k}V_0(t,k)^{-\frac{3}{2}}\big] + \frac{2}{k^2}V_0(t,k) \log\big[\frac{1-\frac{\delta t}{a_0(k)+a_1(k) t}}{1+\mathcal{E}(t,k)}\big]\nn \\
 \,&= & \underbrace{\frac{1}{\log(\frac{1}{t})}\log\left[\frac{4 \sqrt{\pi}a_{0}(k)e^{-k/2}}{k}[\log(\frac{1}{t})]^{\frac{3}{2}}\right]}_{V_{1}(t,k)}
 + \underbrace{\frac{1}{\log(\frac{1}{t})}\log\left\{\left[ 1+t\,\frac{a_1(k)}{a_0(k)}\right]
 \left[\frac{1-\frac{\delta t}{a_0(k)+a_1(k) t}}{1+\mathcal{E}(t,k)}\right]\right\}}_{\mathcal{E}'(t,k)}\,.\nn
\eq
Note that $V_1=V_1(t,k)=O\left(\frac{\log\log \frac{1}{t}}{\log \frac{1}{t}}\right)$ and $\mathcal{E}'=\mathcal{E}'(t,k)=o(\frac{1}{\log\frac{1}{t}})$ since $\mathcal{E}(t,k)\to{}0$ as $t\to{}0$.
\nind Solving the inequality $\frac{\tilde{V}_1}{1+\tilde{V}_1} \ge V_1 +\mathcal{E}' $, we find that
$$
\tilde{V_1} \,\ge \, \frac{V_1+\mathcal{E}'}{1-(V_1+\mathcal{E}')}
=V_1 + \mathcal{E}'  +\frac{V_1^2 +2 V_1  \mathcal{E}' + \mathcal{E}'{}^{2}}{1-(V_1+\mathcal{E}')}.
$$
Since $\mathcal{E}'(t,k)>V^2_1(t,k)$ for $t$ sufficiently small, we conclude that
$\tilde{V_1} \,\ge V_{1}+o(\frac{1}{\log\frac{1}{t}})$. Proceeding similarly for the upper bound, we conclude that
\[
	\tilde{V}_1(t,k) = V_1(t,k) + o(\frac{1}{\log\frac{1}{t}})\,.
\]
as $t\to0$.
\end{proof}

\medskip
\noindent\begin{proof}[Proof of Theorem \ref{TAB2}]
	Let $\bar{F}(t):=\mathbb{P}(X_{t}\geq{}x)$ and $B:=\nu[x,\infty)+\sup_{t>0} \mathbb{P}(X_{t}\geq{}x)/t$.
	In the light of Theorem \ref{ThFLH09}, there exist constants $t_{0}>0$ and $K<\infty$ such that
	 \[
	 	\left|\frac{1}{t} \mathbb{P}(X_{t}\geq{}x)-\nu[x,\infty)\right|\leq K t,
	\]
	for any $0<t<t_{0}$.  Next, conditioning on $T_{t}$,
	\begin{align*}
		\frac{1}{t}\,\mathbb{P}(X_{T_{t}}\geq{}x)=
		\frac{1}{t}\,\mathbb{E}\, \bar{F}(T_{t})=\frac{\nu[x,\infty)}{t}\,\mathbb{E}\,T_{t}+\frac{1}{t}\,\mathbb{E}\left(\left\{\frac{1}{T_{t}}\bar{F}(T_{t})-\nu[x,\infty)\right\}T_{t}\right).
	\end{align*}
	Let $R_2(t)$ denote the second term on the right-hand side, which we can bound as follows:
	\begin{align*}
		|R_{2}|&\leq \frac{1}{t}\,\mathbb{E}\left(1_{\{T_{t}<t_{0}\}}\left|\frac{1}{T_{t}}\bar{F}(T_{t})-\nu[x,\infty)\right|T_{t}\right)+\frac{1}{t}\,\mathbb{E}\left(\,
		1_{\{T_{t}\geq t_{0}\}}\left|\frac{1}{T_{t}}\bar{F}(T_{t})-\nu[x,\infty)\right|T_{t}\right)\\
		&\leq K\frac{1}{t}\mathbb{E}\, T^{2}_{t}+
		B \frac{1}{t}\mathbb{E}(\,
		1_{\{T_{t}\geq t_{0}\}}T_{t})\leq
		K\frac{1}{t}\mathbb{E}\, T^{2}_{t}+
		\frac{B}{t_{0}} \frac{1}{t}\mathbb{E}\,
		(T_{t}^2),
	\end{align*}
	using a Chebyshev upper bound.
	Combining the previous bounds, we have
	\begin{align*}
		\frac{1}{t}\left|\frac{1}{t}\,\mathbb{P}(Z_{t}\geq{}x)-\mathbb{E} Y_{0}\nu[x,\infty)\right|
		&\leq \frac{\nu[x,\infty)}{t}\left|\frac{1}{t}\mathbb{E} T_{t} -\mathbb{E} Y_{0}\right|+
		 \frac{K}{t^{2}}\mathbb{E}\,(T^{2}_{t})+\frac{B}{t_{0}} \frac{1}{t^{2}}\mathbb{E}\,(T_{t}^2)\\
		&\leq  \frac{\nu[x,\infty)}{t^{2}}\int_{0}^{t}
		\left|\mathbb{E} Y_{s} -\mathbb{E} Y_{0}\right| ds+\frac{K}{t^{2}} \mathbb{E} (T_{t}^{2}) +
		\frac{B}{t_{0}} \frac{1}{t^{2}} \mathbb{E} (T_{t}^{2})
	\end{align*}
	Next, (\ref{AN3}) and Jensen's inequality imply that
	\begin{align*}
		 \limsup_{t\to{}0}\frac{1}{t^{2}}\int_{0}^{t}
		\left|\mathbb{E} Y_{s} -\mathbb{E} Y_{0}\right| ds<\infty,
		\quad \limsup_{t\to{}0}\frac{1}{t^{2}}\mathbb{E} (T_{t}^2)\leq \limsup_{t\to{}0}\frac{1}{t}\int_{0}^{t} \mathbb{E} Y^{2}_{s}ds<\infty,
	\end{align*}
	and (\ref{ABTa}) will follow. 	
	In order to show (\ref{ABTb}), consider now
	\[
		G_{x}(t):=\frac{1}{t}\left\{\frac{1}{t} \mathbb{P} (X_{t}\geq{}x) -\nu[x,\infty)\right\}-\frac{d_{2}(x)}{2},
	\]
	and note that, in view of Theorem \ref{ThFLH09}, there exist constants $t_{0}(\varepsilon)>0$ and $K\in(0,\infty)$ such that
	\[
		 \sup_{t>0} |G_{x}(t)|\leq K, \quad \text{and}\quad |G_{x}(t)|<\varepsilon,
	\]
	for any $0<t<t_{0}$.
	As before,
	\begin{align*}
		\frac{1}{t^{2}}\mathbb{P}(Z_{t}\geq{}x)&= \frac{1}{t^{2}} \mathbb{E} \bar{F}(T_{t})=
		\frac{1}{t^{2}}\,\mathbb{E}\left(\left\{\frac{1}{T_{t}}\bar{F}(T_{t})-\nu[x,\infty)\right\}T_{t}\right)+\frac{\nu[x,\infty)}{t^{2}}\,\mathbb{E}\,T_{t}\\
		&=
		 \frac{1}{t^{2}}\,\mathbb{E}\left(G_{x}(T_{t})T_{t}^{2}\right)+\frac{d_{2}(x)}{2t^{2}}\,\mathbb{E}(T_{t}^{2})+\frac{\nu[x,\infty)}{t^{2}}\,\mathbb{E}\,T_{t}\,.
	\end{align*}
	The first term in the last expression can be bounded as follows:
	\begin{align*}
		\left|\frac{1}{t^{2}}\,\mathbb{E}\left(G_{x}(T_{t})T_{t}^{2}\right)\right|&\leq
		\left|\frac{1}{t^{2}}\,\mathbb{E}\left(1_{\{T_{t}<t_{0}\}}G_{x}(T_{t})T_{t}^{2}\right)\right|+
		\left|\frac{1}{t^{2}}\,\mathbb{E}\left(1_{\{T_{t}\geq t_{0}\}}G_{x}(T_{t})T_{t}^{2}\right)\right|\\
		&\leq \frac{\varepsilon}{t^{2}}\mathbb{E}\, (T_{t}^{2})+
		K \frac{1}{t^{2}}\mathbb{E}(\,
		1_{\{T_{t}\geq t_{0}\}}T_{t}^{2}).
	\end{align*}
	Then, it is now clear that we can bound the expression
	\begin{align*}
		D_{t}:=\left|\frac{1}{t^{2}} \mathbb{P} (Z_{t}\geq{}x) -\frac{1}{t}\mathbb{E} Y_{0} \nu[x,\infty)-\frac{\rho d_{2}(x)}{2}
		 - {\frac{\gamma\nu[x,\infty)}{2}}\right|,
	\end{align*}
	as follows
	\begin{align*}
		D_{t}\leq \varepsilon\frac{1}{t^{2}}\mathbb{E}(T_{t}^{2})+K\frac{1}{t^{2}} \mathbb{E}(T^{2}_{t} 1_{\{T_{t}\geq{}t_{0}\}}) +\nu[x,\infty)\left|{\frac{1}{t}\left(\frac{1}{t} \mathbb{E} T_{t} -\mathbb{E} Y_{0}\right) - \frac{\gamma}{2}}\right|+ \frac{|d_{2}(x)|}{2}\left|\frac{1}{t^{2}}\mathbb{E} (T_{t}^{2}) - \rho \right|.
	\end{align*}
	The third term on the right hand side of the above inequality is such that
	\begin{align*}
		 \frac{1}{t}\left(\frac{1}{t} \mathbb{E} T_{t} -\mathbb{E} Y_{0}\right) - \frac{\gamma}{2}=\frac{1}{t^{2}} \int_{0}^{t}s
		 \left\{\frac{1}{s}(\mathbb{E} Y_{s} -\mathbb{E} Y_{0})-\gamma \right\}ds,
	\end{align*}
	which converges to $0$ as $t\to{}0$ due to (iii) in (\ref{AN3}). Hence, using (iv)-(v) in (\ref{AN4}) and
	\[
		 \mathbb{E}(T^{2}_{t} 1_{\{T_{t}\geq{}t_{0}\}})\leq  \mathbb{E}(T^{3}_{t})/t_{0}\leq t^{2}\int_{0}^{t} \mathbb{E} (Y_{s}^{3}) ds/t_{0},
	\]
	we have
	\[
		\limsup_{t\to{}0} D_{t}\leq \varepsilon \rho,
	\]
	which implies (\ref{ABTb}) because $\varepsilon$ is arbitrary.
\end{proof}

\medskip
\nind \begin{proof}[Proof of Lemma \ref{KPN}]
\nind We start by introducing some notation. Suppose that, {under $\mathbb{P}^{*}$}, $X$ has L\'evy-It\^o decomposition
\begin{equation}\label{LevyItoDecmp}
    {X}_{t}= b^{*}t+\int_{0}^{t}\int_{|z|\leq{}1}
    z\,
    \bar\mu^{*}(dz,ds)+
    \int_{0}^{t}\int_{|z|>1} z\,
    \mu^{*}(dz,ds),
\end{equation}
where $\mu^{*}$ is an independent Poisson measure on
$\mathbb{R}\backslash\{0\}\times\mathbb{R}_{+}$ with
mean measure $ \nu^{*}(dz)dt$, and $\bar\mu^{*}(dz,dt):=\mu^{*}(dz,dt)-\nu^{*}(dz)dt$.
Next, for a given fixed  $\varepsilon>0$, we set
\begin{align}\label{TrctedLevy}
    \widetilde{{X}}^{\varepsilon}_{t}:=
    \int_{0}^{t}\int_{\mathbb{R}} z \, {\bf 1}_{\{|z|\geq{}\varepsilon\}} \mu^{*}(dz,ds),\quad\text{ and }\quad
    {X}^{\varepsilon}_{t} := {X}_{t}- \widetilde{{X}}^{\varepsilon}_{t};
\end{align}
hence, $\widetilde{{X}}^{\varepsilon}$ is a  compound Poisson process with intensity
$\lambda_{\varepsilon}:=\nu^{*}(|z|\geq{}\varepsilon)$
and jumps $\{\xi_{i}^{\varepsilon}\}_{i}$ with common distribution
$ {\bf 1}_{|z|\geq{}\varepsilon}\nu^{*}(dz)/\lambda_{\varepsilon}$, while
the remainder process  ${X}^{\varepsilon}$ is a L\'evy process with triplet
$(0,b_{\varepsilon}^{*},
{\bf 1}_{\{|z|\leq{}\varepsilon\}} \nu^{*}(dz))$,
where
\[
    b_{\varepsilon}^{*}:=b^{*}-\int_{|z|\leq{}1}z {\bf 1}_{\{|z|\geq{}\varepsilon\}}\nu^{*}(dz).
\]
Let us fixed $\varepsilon=x/2$.  We first note that
\[
	\mathbb{P}^{*}\left(\widetilde{X}^{\varepsilon}_{t}\geq{}x\right)\leq{} K x^{-Y} t,
\]
for any $t,x>0$ and for some universal constant $K$. Indeed, if we let $N^{\varepsilon}_{t}$ denote the number of jumps before time $t$ of the compound Poisson process $\widetilde{X}^{\varepsilon}$, then we have
\begin{align*}
	\mathbb{P}^{*}\left(\widetilde{X}^{\varepsilon}_{t}\geq{}x\right)
	 &\leq \mathbb{P}^{*}\left(N^{\varepsilon}_{t}\neq{}0\right)=1-e^{-\lambda_{\varepsilon}t}\leq
	 \lambda_{\varepsilon} t=\nu(\{z:|z|\geq x/2\})t\leq C x^{-Y} t.
\end{align*}
We now estimate $\mathbb{P}^{*}\left({X}^{\varepsilon}_{t}\geq{}x\right)$. {First, note} that, due to the symmetry of the L\'evy measure $\nu$,
\begin{align*}
	\mathbb{E}^{*}(X_{t}^{\varepsilon})&=t (b^{*}_{\varepsilon}+\int_{|z|\geq{}1} z {\bf 1}_{\{|z|\leq{}\varepsilon\}} \nu^{*}(dz))=t (b^{*}-\int_{|z|\leq{}1}z {\bf 1}_{\{|z|\geq{}\varepsilon\}}e^{z}\nu(dz)+\int_{|z|\geq{}1} z {\bf 1}_{\{|z|\leq{}\varepsilon\}}e^{z}\nu(dz))\\
	&=t(b+\int_{|z|\leq{}1} z(e^{z}-1)\nu(dz)-\int_{|z|\leq{}1}z {\bf 1}_{\{|z|\geq{}\varepsilon\}}e^{z}\nu(dz)+\int_{|z|\geq{}1} z {\bf 1}_{\{|z|\leq{}\varepsilon\}}e^{z}\nu(dz))\\
	&=t(b+\int_{|z|\leq{}\varepsilon} z(e^{z}-1)\nu(dz))=t(b+\int_{|z|\leq{}x/2} z(e^{z}-1)\nu(dz)).
\end{align*}
Thus,   using concentration inequalities for centered random variable (e.g. \cite{Hou02}, Corollary 1),
for $x>2\mathbb{E} X_{t}^{\varepsilon}$,
\begin{align*}
     \mathbb{P}^{*}({X}_{t}^{\varepsilon}  \geq x) \leq
     \mathbb{P}^{*}({X}_{t}^{\varepsilon} - \mathbb{E}^{*} {X}_{t}^{\varepsilon} \geq x/2)
     \leq {e^{\frac{x}{2\varepsilon}-\left(\frac{x}{2\varepsilon}+\frac{tV_{\varepsilon}^2}{\varepsilon^2}\right)
\log\left(1+\frac{\varepsilon x}{2tV_{\varepsilon}^2}\right)}}
    \leq \left(\frac{2eV_{\varepsilon}^2}{\varepsilon x}\right)^{\frac{x}{2\varepsilon}}t^{\frac{x}{2\varepsilon}}
   \leq \frac{4V_{x/2}^2}{x^{2}} t,
\end{align*}
where $V_{\varepsilon}^2 := {\rm Var}^{*}(X_{1}^{\varepsilon})=\int_{\{|z|\leq{}\varepsilon\}} z^2 \nu^{*}(dz)$. Since $M>1$, there exists a universal constant $K$ such that
\begin{align*}
	\frac{V_{x/2}^2}{x^{2}}&=\frac{C\int_{0}^{x/2} \frac{e^{-(G-1)z}}{z^{1+Y}} z^{2}dz}{x^{2}}+
	\frac{C\int_{0}^{x/2} \frac{e^{-(M-1)z}}{z^{1+Y}} z^{2}dz}{x^{2}}\\
	&\leq \frac{2C\int_{0}^{x/2} z^{1-Y}dz}{x^{2}}=
	\frac{2C (x/2)^{2-Y}}{(2-Y)x^{2}}=K x^{-Y}.
\end{align*}
We conclude that
\(
     \mathbb{P}^{*}({X}_{t}^{\varepsilon}  \geq x) \leq K t x^{-Y}
\)
for $t(b+\int_{|z|\leq{}x/2} z  (e^{z}-1)\nu(dz))< x/2$. This completes the proof, since
\[
	\mathbb{P}^{*}(X_{t}\geq{}x)\leq \mathbb{P}^{*}(X_{t}^{\varepsilon}\geq{}x/2)+\mathbb{P}^{*}(\widetilde{X}^{\varepsilon}_{t}\geq{}x/2)
	\leq K t x^{-Y},
\]
whenever $t(b+\int_{|z|\leq{}x/4} z  (e^{z}-1)\nu(dz))< x/4$.
\end{proof}

\medskip

\noindent\begin{proof}[Proof of Proposition \ref{PATM}]
Without loss of generality, we assume $S_{0}=1$. We break the proof into two parts:

\smallskip
\noindent\textbf{{(1)}}
Let us assume through this part that $(X_{t})_{t}$ is a symmetric CGMY process. Let $b(u):=b+\int_{|z|\leq{}u} z  (e^{z}-1)\nu(dz)$. Obviously,
\begin{align}\label{EIn}
	b(u)\leq |b(u)|\leq|b|+\int_{|z|\leq{}1} |z| |e^{z}-1|\nu(dz)+2\int_{|z|\geq{}1} |z|  e^{z}\nu(dz):=\bar{b}<\infty.
\end{align}
Next, we write
\begin{align}\label{T1AE1}
	\int_{0}^{\infty} e^{-t^{1/Y}u} \mathbb{P}^{*}\left(X_{t}\geq{}t^{1/Y}u\right)du&=\int_{0}^{\infty}{\bf 1}_{\{{u/4}\leq t^{1-1/Y}\bar{b}\}}  e^{-t^{1/Y}u} \mathbb{P}^{*}\left(X_{t}\geq{}t^{1/Y}u\right)du\\
	&+\int_{0}^{\infty}{\bf 1}_{\{{u/4}>t^{1-1/Y}\bar{b}\}} e^{-t^{1/Y}u} \mathbb{P}^{*}\left(X_{t}\geq{}t^{1/Y}u\right)du.\label{T2AE1}
\end{align}
Clearly, $e^{-t^{1/Y}u} \mathbb{P}^{*}\left(X_{t}\geq{}t^{1/Y}u\right)\le 1$, so the first term converges to $0$ as $t\to{}0$ because $Y\in(1,2)$. From the inequality (\ref{EIn}), we have
\[
	{\bf 1}_{\left\{{u/4}>t^{1-1/Y}\bar{b}\right\}}= {\bf 1}_{\left\{t^{1/Y}u>t\bar{b}\right\}}\leq {\bf 1}_{\left\{t^{1/Y}{u/4}>t{b}(t^{1/Y}u/4)\right\}},
\]
and using Lemma \ref{KPN}, we obtain that
\begin{align*}
		{\bf 1}_{\left\{{u/4}>t^{1-1/Y}\bar{b}\right\}} e^{-t^{1/Y}u} \mathbb{P}^{*}\left(X_{t}\geq{}t^{1/Y}u\right)
		&\leq {\bf 1}_{\left\{t^{1/Y}{u/4}>t{b}\left(t^{1/Y}u/4\right)\right\}}e^{-t^{1/Y}u} \mathbb{P}^{*}\left(X_{t}\geq{}t^{1/Y}u\right)\\
		&\leq \min\{K(t^{1/Y}u)^{-Y} t,1\}=\min\{Ku^{-Y} ,1\},
\end{align*}
which is integrable because $Y\in(1,2)$. Hence, we can apply dominated convergence in the second term (\ref{T2AE1}) and,  using Proposition \ref{prop:CGMYConv}, we obtain that
\[
	\lim_{t\to{}0} \int_{0}^{\infty} e^{-t^{1/Y}u} \mathbb{P}^{*}\left(X_{t}\geq{}t^{1/Y}u\right)du =
	\int_{0}^{\infty} \mathbb{P}^{*}(Z\geq{}u)du=\mathbb{E}^{*}(Z_{+}).
\]
This show the result in view of (\ref{CMR})-(\ref{LOP1}).

{
\smallskip
\noindent\textbf{(2)} In this second part, we relax the symmetry restriction.
	The idea is to reduce the problem to the symmetric case by applying a change of probability measure.\footnote{{A similar argument is applied in the proof of Proposition 5-(2) in \cite{Tnkv10} but with a different aim}.}  Concretely, let $\beta:=\frac{M-G}{2}$ and, as in the proof of Proposition \ref{prop:SmallTimeCallsLevy}, define a probability measure $\widehat{\mathbb{P}}$ on $(\Omega, \mathcal{F})$ such that
	\begin{equation}\label{DSM2}
		\widehat{\mathbb{P}}(B)=\mathbb{E}\left(e^{\beta X_{t}} 1_{B} \right)/\mathbb{E}\left(e^{\beta X_{t}}\right),
\end{equation}
for any $B\in\mathcal{F}_{t}$. We can check that, under $\widehat{\mathbb{P}}$,  $(X_{t})_{t}$ is a symmetric CGMY model with $\hat{C}=C$, $\hat{Y}=Y$, and $\hat{G}=\hat{M}=(M+G)/2$. Indeed, it follows that
\begin{align*}
	\hat{\mathbb{E}}\left(e^{iuX_{t}}\right)&=\mathbb{E}\left(e^{(iu+\beta)X_{t}} \right)/\mathbb{E}\left(e^{\beta X_{t}}\right)\\
	&=
	\exp\left[t \,C \Gamma(-Y)\left\{(M-\beta-iu)^Y+(G+\beta+iu)^Y-(M-\beta)^Y-(G+\beta)^Y\right\}+i\hat{b}ut\right].
\end{align*}
Also, assuming $\beta>0$,
\begin{align*}
	\left|{\mathbb{E}}\left(\left(e^{X_{t}}-1\right)_{+}e^{\beta X_{t}}\right)-\mathbb{E}\left(e^{X_{t}}-1\right)_{+}\right|=
	{\mathbb{E}}\left(\left(e^{X_{t}}-1\right)_{+}\left(e^{\beta X_{t}}-1\right)\right)\leq
	{\mathbb{E}}\left(\left(e^{X_{t}}-1\right)\left(e^{\beta X_{t}}-1\right)\right)=O(t),
\end{align*}
since the moment function $\varphi(x):=\frac{1}{\beta}(e^{x}-1)(e^{\beta x}-1)\sim x^{2}$ and Theorem 1.1-(ii) in \cite{FL08} can be applied. If $\beta<0$, then
\begin{align*}
	\left|{\mathbb{E}}\left(\left(e^{X_{t}}-1\right)_{+}e^{\beta X_{t}}\right)-\mathbb{E}\left(e^{X_{t}}-1\right)_{+}\right|=
	{\mathbb{E}}\left(\left(e^{X_{t}}-1\right)_{+}\left(1-e^{\beta X_{t}}\right)\right)\leq
	{\mathbb{E}}\left(\left(e^{X_{t}}-1\right)\left(1-e^{\beta X_{t}}\right)\right)=O(t),
\end{align*}
for the same reason. Then, we only need to consider the asymptotic behavior of ${\mathbb{E}}\left(\left(e^{X_{t}}-1\right)_{+}e^{\beta X_{t}}\right)$ as $t\to{}0$, because $Y \in (1,2)$ so the $O(t)$ terms above are smaller than $O(t^{1/Y})$.  However,
	\[
		{\mathbb{E}}\left(\left(e^{X_{t}}-1\right)_{+}e^{\beta X_{t}}\right)=\mathbb{E}\left(e^{\beta X_{t}}\right)
		\widehat{\mathbb{E}}\left(\left(e^{X_{t}}-1\right)_{+}\right),
	\]
and thus, using the fact that $(X_{t})_{t}$ is symmetric under $\widehat{\mathbb{P}}$ and part (1) in this proof,
\begin{align*}
	\lim_{t\to{}0}t^{-1/Y}{\mathbb{E}}\left(\left(e^{X_{t}}-1\right)_{+}e^{\beta X_{t}}\right)=
	\lim_{t\to{}0}t^{-1/Y}
		\widehat{\mathbb{E}}\left(\left(e^{X_{t}}-1\right)_{+}\right)=\mathbb{E}^{*}(Z_{+}).
\end{align*}
}
\end{proof}

\end{document}